\documentclass[11pt,nofootinbib,superscriptaddress]{revtex4-1}
\usepackage[english]{babel}
\usepackage[T1]{fontenc}
\usepackage[latin9]{inputenc}
\usepackage{geometry}
\geometry{verbose,tmargin=2cm,bmargin=3cm,lmargin=2cm,rmargin=2cm}
\usepackage{amsmath}
\usepackage{amssymb}
\usepackage{graphicx}
\usepackage{color}
\usepackage{esint}

 
\linespread{1.3}

\makeatletter
\@addtoreset{figure}{section}

\makeatother

\newcommand{\figref}[1]{Fig. \ref{#1}}

\newcommand{\makeSymbol}[1]{\mathord{\vcenter{\hbox{#1}}}}

\def\beq{\begin{equation}}
\def\eeq{\end{equation}}

\usepackage{amsthm}

\newtheorem*{lemma*}{Lemma}
\newtheorem*{theorem*}{Theorem}

\def\tip{\makeSymbol{\includegraphics{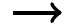}}}
\def\feather{\makeSymbol{\includegraphics{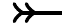}}}
\newcommand{\vgraph}{\mathfrak{n}}
\def\tsum{\mathop{\tilde{\sum}}}

\newcommand{\N}{\mathbb{N}}


\newcommand{\SU}{\mathrm{SU}}
\newcommand{\su}{\mathfrak{su}(2)}

\renewcommand{\d}{\mathrm{d}}

\newcommand{\Tr}{\mathrm{Tr}}

\newcommand{\ket}[1]{|#1\rangle}
\newcommand{\bra}[1]{\langle #1|}
\newcommand{\scal}[1]{\langle #1\rangle}

\newcommand{\dtensor}[3]{\begin{pmatrix}  \multicolumn{2}{c}{#1}\\#2&#3\end{pmatrix}}
\newcommand{\threej}[6]{\begin{pmatrix}  #1&#3&#5\\#2&#4&#6\end{pmatrix}}
\newcommand{\sixj}[6]{\left\{ \begin{array}{ccc}  #1&#3&#5\\#2&#4&#6\end{array}\right\}}
\newcommand{\tinysixj}[6]{\{\begin{smallmatrix}  #1&#3&#5\\#2&#4&#6\end{smallmatrix}\}}
\newcommand{\ninej}[9]{\left\{ \begin{array}{ccc}  #1&#2&#3\\#4&#5&#6\\#7&#8&#9\end{array}\right\}}
\newcommand{\tinyninej}[9]{\left\{\begin{smallmatrix}  #1&#2&#3\\#4&#5&#6\\#7&#8&#9\end{smallmatrix}\right\}}

 \date{\normalsize \today} 

\begin{document}
\author{Emanuele Alesci}
\email{Emanuele.Alesci@fuw.edu.pl}
\affiliation{Instytut Fizyki Teoretycznej, Uniwersytet Warszawski, ul. Ho{\.z}a 69, 00-681 Warszawa, Poland, EU}

\author{Klaus Liegener}
\email{Klaus.Liegener@gravity.fau.de}
\affiliation{Universit\"at Erlangen, Institut f\"ur Theoretische Physik III, Lehrstuhl f\"ur Quantengravitation\\ Staudtstrasse 7, D-91058 Erlangen, EU}

\author{Antonia Zipfel}
\email{antonia.zipfel@gravity.fau.de}
\affiliation{Universit\"at Erlangen, Institut f\"ur Theoretische Physik III, Lehrstuhl f\"ur Quantengravitation\\ Staudtstrasse 7, D-91058 Erlangen, EU}

\title{\bf Matrix Elements of Lorentzian Hamiltonian Constraint in LQG}

\begin{abstract}
\begin{center}
{\bf Abstract}

The Hamiltonian constraint is the key element of the canonical formulation of LQG coding its dynamics. In Ashtekar-Barbero variables it naturally splits into the so called Euclidean and Lorentzian parts. However, due to the high complexity of this operator, only the matrix elements of the Euclidean part have been considered so far. Here we evaluate the action of the full constraint, including the Lorentzian part. The computation requires an heavy use of $SU(2)$ recoupling theory and several tricky identities among n-j symbols are used to find the final result: these identities, together with the graphical calculus used to derive them, also simplify the Euclidean constraint and are of general interest in LQG computations.
\end{center}
\end{abstract}
\maketitle

\section{Introduction}

In the Hamiltonian formulation General Relativity (GR) is completely governed by the Diffeomorphism and Hamiltonian constraints. For many years the complicated structure of these constraints prevented a quantization of the theory, until Ashtekar \cite{Ashtekar:1986yd} suggested to replace the "old" metric variables by connections and tetrads. Indeed in this variables GR resembles other gauge theories, like Yang-Mills theory, whereupon one encounters an $\SU(2)$-Gauss constraint in addition to Diffeomorphism and Hamiltonian constraints. This formulation was further improved by Barbero \cite{barbero} and serves today as the classical starting point of Loop Quantum Gravity (LQG) \cite{lqgcan1,lqgcan2,lqgcan3}.

LQG follows the Dirac quantization program \cite{DiracQM1964} for constrained systems, i.e. one introduces a preliminary kinematical Hilbert space on which the constraints can be represented by operators and then seeks for the kernel of these operators defining the physical Hilbert space. 
The Gauss constraint is solved by introducing a so called "spin network" basis \cite{Rovelli:1995ac}, naturally leading to a combinatorial discrete structure of space-time similar to the one proposed by Penrose \cite{Penrose}, while the Diffeomorphisms constraint is solved by considering equivalence class of spin networks under diffeomorphisms denoted "s-knots" \cite{Ashtekar:1995zh}.

A major obstacle for completing the canonical quantization program in LQG is the implementation of the Hamiltonian constraint $S$. The difficulties are mainly caused by the non-polynomial structure of $S$ and the weight factor $1/\sqrt{\det(q)}$ determined by the intrinsic metric $q:=q_{ab}$ on the initial hypersurface $\Sigma$. In fact, Ashtekar \cite{Ashtekar:1986yd} was motivated by the observation that the Hamiltonian constraint can be casted into a polynomial form when the metric variables are replaced by triads and \emph{complex} connections. Even though this simplifies the constraint one has to deal instead with difficult reality conditions. This reality structure is of course trivial when the theory is formulated in \emph{real} connection variables as suggested by Barbero \cite{barbero}. Unfortunately, the constraint remains to be non-polynomial in these variables. It was then proposed to absorb the weight $1/\sqrt{\det(q)}$ in the Lapse-function. But it turned out \cite{Thiemann:1996ay} that this density weight is crucial in order to obtain a finite, background independent operator.

After many efforts \cite{Early-hamiltonians}, Thiemann  \cite{Thiemann96a} discovered that both problems, the non-polynomiality and the appearance of the weight factor, can be solved by expressing the inverse triads through the Poisson bracket of volume and connections  ("Thiemann trick"). This trick made it possible to construct a finite, anomaly-free operator that corresponds to the non-rescaled Hamiltonian constraint \cite{Thiemann96a, Thiemann96b, lqgcan2} and acts by changing the underlying graph of the spin networks. The formal solution \cite{Thiemann96b} to this constraint are superpositions of s-knot states with "dressed nodes", that are nodes with a spider-web like structure. Criticism appeared \cite{dubbi} mostly concerning the "ultralocal" character of the construction and regularization ambiguities. However until now, this is the only known scheme able to realize an anomaly free quantization of the Dirac algebra at least on shell.

This construction is at the heart of many other approaches within canonical LQG, as the master constraint program \cite{master}, Algebraic Quantum Gravity (AQG) \cite{Giesel:2006I}, most recent models with matter \cite{Domagala, Giesel:2012, Husain:2011tk} and symmetry reduced models like Loop Quantum Cosmology \cite{Ashtekar:2011ni,  Bojowald:2008zzb}. Also the covariant approach (spin foam models) \cite{lqgcov} 
is motivated by the idea of realizing the "time-evolution" generated by a graph-changing Hamiltonian \cite{ReisenbergerRovelli97}. In fact, it is hoped that the spin-foam model might provide a physical scalar product for canonical LQG (see e.g. \cite{SFscalarproduct}). The attempt to match both approaches (not only heuristically) has led to new regularization schemes for the Hamiltonian constraint \cite{iopro, ioreg} and also to the discovery of new physical states in the canonical model \cite{ioantonia}.

Despite it's central role for LQG the action of $S$ has been analyzed explicitly in only very few examples \cite{Borissovetal97, Gaul, ioantonia} and these are confined to the Euclidean part of the constraint only. This is mainly due to two reasons: first the presence of the volume operator \cite{RovelliSmolin95, AshtekarLewand98} and second the non trivial recoupling of $\SU(2)$ irreducible representations. The volume operator in LQG has been studied intensively in \cite{Thiemann:1996au, DePietriRovelli96, Brunnemann:2004xi}. Yet, the matrix elements are getting very complicated the more edges are involved so that one has to apply numerical methods \cite{Brunnemann:2007ca, Brunnemann:2007ca2} to evaluate it. The second difficulty appears due to the regularization of the connections by holonomies. The corresponding operators act by multiplication which produces several Clebsh-Gordan decompositions and modifications to the intertwiners between $\SU(2)$ representations at the nodes. 

In this paper we explicitly compute the matrix elements of the \emph{full} Lorentzian constraint in the Thiemann prescription for trivalent nodes. The final result still depends on the matrix elements of the volume which are unknown in closed form, but  in principle computable. In the course of the evaluation several recoupling identities will be proven, which greatly simplify the final result and are expected to be useful in all the computations involving curvature loops or the "Thiemann trick". 
The resulting compact formula presented here opens the possibility to test the implementation of the constraint by simulations, analyze the behavior of $S$ in a large j-limit or further develop the methods of \cite{ioantonia}.

The article is organized as follows: In Section \ref{HConstraint} Thiemann's construction for the Euclidean and Lorentzian term is  briefly reviewed and in Section \ref{tools} the main recoupling identities are introduced that will then be applied in Section \ref{sec:main1} to the Euclidean constraint leading to a new and very compact expression for it. Finally in Section \ref{sec:main2}, we present the matrix elements of the Lorentzian part. Section \ref{conclusion} is left for concluding remarks and an Appendix with further details on n-$j$ symbols and the Volume operator is included to make the manuscript self-contained.

\section{Hamiltonian Constraint} 
\label{HConstraint}

\subsection{Classical constraint}
Let $e_a^i$ be a triad on a smooth, spatial hypersurface $\Sigma$ defining the intrinsic metric $q_{ab}=\delta_{ij} e^i_a e^j_b$. Here, $a,b=1,2,3$ are tensorial and $i,j=1,2,3$ are $\su$-indices. In the following, $\Gamma^i_a$ denotes the spin-connection associated to $e^i_a$ and $K_{ab}$ the extrinsic curvature. Given $K^i_a:=\mathrm{sgn}(\det(e^j_c)) e^b_i K_{ab}$, it can be shown that the densitized inverse triad $E^a_i:=\sqrt{\det(q)} e^a_i$ and $A^i_a:=\Gamma^i_a+\gamma K^i_a$ form a canonical conjugated pair, 
\beq
\{A^i_a(x),E^b_j(y)\}=\kappa \delta^i_j \delta^b_a \delta^{(3)}(x,y)~,
\eeq
where $\gamma$ is a real non-zero parameter and $\kappa= \frac{8 \pi G}{c^3}\gamma$. If $F_{ab}$ denotes the curvature of $A$ and $s$ the signature of the space-time metric then the classical Hamiltonian constraint is of the form
\beq
S=\frac{1}{\sqrt{\det(q)}}{\rm Tr}\{(F_{ab}-(\gamma^2-s)[K_a, K_b])[E^a,E^b]\}~.
\label{constraint}
\eeq

According to \cite{Thiemann96a}, the square root in \eqref{constraint} can be absorbed by using Poisson brackets of the connection $A$ with the Volume,
\beq
V:=V(\Sigma)=\int_{\Sigma}\d x^3\sqrt{\det(q)}~,
\eeq 
and the integrated curvature,  
\beq
K:=\int_{\Sigma}\d x^3 E^a_j K^j_a~.
\eeq
More explicitly, inserting
\begin{gather}
\begin{gathered}
  \label{keyID1}
\frac{[E^a,E^b]^i}{\sqrt{\det(q)}} =-\frac{2}{\kappa}\epsilon^{abc} \{A_c^i(x),V\}\; \text{ and }\;
K^i_a=\frac{1}{\kappa}\{A^i_a(x),K\}
\end{gathered}
\end{gather}
in \eqref{constraint} yields
\beq
  H= \frac{2}{\kappa} \epsilon^{abc}\,{\rm Tr}(F_{ab}  \{ A_c, V \})             
  \label{H_E_N}  
 \eeq
 and
 \beq
 \label{eqn:T}
  T:=S-H
     = (s-\gamma^2)\frac{4}{\kappa^3}   \epsilon^{abc}\, {\rm Tr}(\{A_a,K\} \{A_b,K\} \{ A_c, V \})~,
\eeq
where $S$ was split into an Euclidean part $H$ and remaining constraint $T$, which vanishes for $\gamma^2=1$ and $s=1$. The second part \eqref{eqn:T} can be further modified by expressing $K$ through the 'time' derivative of the volume:
\beq
K=-\frac{1}{\kappa \gamma}\{V,\int_{\Sigma}\d x^3 H\}~.
\eeq
Thus $S$ is completely determined by the connection $A$, the curvature $F$ and the volume $V$ all of which have well-defined operator analogous in LQG. 

\subsection{Quantization}

Prior to quantization the local expressions \eqref{H_E_N} and \eqref{eqn:T} must be smeared and regularized. That is to say the connection and curvature of the integrated constraint $S[N]:=\int_{\Sigma}N(x) S(x)$ are replaced by holonomies along edges and loops respectively. Of course the properties of the operator depend highly on the chosen regularization and up to now there are several different models on the market (see e.g. \cite{Thiemann96a,lqgcan1, ioreg}). Here, we follow the original proposal \cite{Thiemann96a} because it is comparatively easy and leads to anomaly free and finite operators. \\
Since the Euclidean constraint $H[N]$ depends linearly on the volume and the volume operator is acting locally on the nodes it suffices to construct a regularization in the neighborhood of a node $\vgraph$ in a given graph $\Gamma$ and then extend it to all of $\Sigma$. Let $s_I$ be a segment of an edge $e_I$ incident at $\vgraph$ and $\alpha_{IJ}$ the loop generated by $s_I$ and $s_J$. I.e. $\alpha_{IJ}=s_I \circ a_{IJ} \circ s_J^{-1}$ where $a_{IJ}$ is a semi-analytic arc which only intersects with $\Gamma$ in the endpoints of $s_I$ and $s_J$ (see Fig. \ref{tetrahedron}). In this way, any three (non-planar) edges $e_I,e_J$ and $e_K$ incident at $\vgraph$ constitute an \emph{elementary tetrahedron} $\Delta_{IJK}$. Starting with $ \Delta_{IJK}$ one can now construct seven additional tetrahedra (see \cite{Thiemann96a} for details), such that the eight tetrahedra including $\Delta_{IJK}$ cover a neighborhood of $\vgraph$. Afterwards this is extended to a full triangulation\footnote{Note, for each triple of edges adjacent to a node one constructs an \emph{independent} triangulation.} $\mathfrak{T}(I,J,K)$ of $\Sigma$ and $H[N]$ is decomposed into $\sum\limits_{\Delta \in \mathfrak{T}} H_{\Delta}[N]$. \\
\begin{figure}
  \begin{center}
\includegraphics{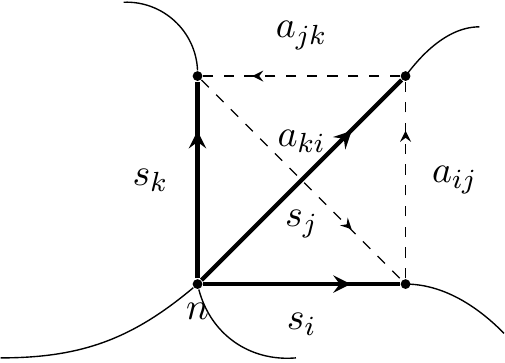}\\
   \parbox{8 cm}{\caption[]{\label{tetrahedron} \small
       An elementary tetrahedron $\Delta \in T$ constructed
      by adapting it to a graph $\Gamma$.}}
  \end{center}
\end{figure}
On the elementary tetrahedron $\Delta_{IJK}$ the connection $A$ and the curvature $F$ are regularized as usual by smearing along $s_K$ and $\alpha_{IJ}$ respectively so that the regularized constraint is defined by
\begin{align}
\begin{split}
      H_{\Delta_{IJK}}[N]  :&=\sum_{i,j,k\in \{I,J,K\}}  \frac{2}{3\kappa} \, N(\vgraph) \, \epsilon^{ijk} \, \mathrm{Tr}\Big[h_{\alpha_{ij}} 
     	h_{s_{k}} \big\{h^{-1}_{s_{k}},V\big\}\Big]~,
 \label{H_delta}
\end{split}
\end{align}
where $h_s$ is the holonomy alongside $s$ and $N(\vgraph)$ is the value of the lapse function $N(x)$ at $\vgraph$. In the article \cite{Gaul} it was proposed to generalize this by considering holonomies in an arbitrary irrep $m$ that yields
 \beq
   \label{Hm_delta:classical2}
   H^m_{\Delta_{IJK}}[N]:=\sum_{i,j,k\in \{I,J,K\}} \frac{N(\vgraph)}{N^2_m\kappa} \,  \, \epsilon^{ijk} \,
   \mathrm{Tr}\Big[h^{(m)}_{\alpha_{ij}} h^{(m)}_{s_{k}} \big\{h^{(m)}_{s_{k}^{-1}},V\big\}\Big] ~,
\eeq
with the normalization factor $N_m^2=(2m+1)m(m+1)$. At this point, the quantization of \eqref{Hm_delta:classical2} is straight forward. Its action on a cylindrical function $T_s$ on a spin net $s$ with underlying graph $\Gamma$ is 
\beq
\label{eqn:quantumH}
\hat{H}[N] \,T_s=\frac{i}{N^2_m\,\kappa\, l^2_p}\sum_{\vgraph\in\Gamma}\frac{8 N(\vgraph)}{E(\vgraph)}\sum_{\vgraph(\Delta)=\vgraph} \epsilon^{ijk} \,\Tr\left( h^{(m)}_{\alpha_{ij}} h^{(m)}_{s_k}[h^{(m)}_{s_k^{-1}},\hat{V}]\right)T_s~.
\eeq
The first sum is running over all nodes of $\Gamma$ and the second over all elementary tetrahedra. $l_p$ is the Planck length. Because every vertex is surrounded by 8 tetrahedra the smearing of $H$ in the neighborhood of $\vgraph$ yields eight times the same factor. Apart from that, at an $m$-valent vertex $\vgraph$ there are $E(\vgraph):=\binom{m}{3}$ elementary tetrahedra each of which determines an adapted triangulation. Therefore we need to divide by $E(\vgraph)$ to avoid over counting.\\
A huge advantage of the operator defined above is that it is anomaly free, i.e. that the commutator of two constraints $\hat{H}$ is vanishing up to diffeomorphisms. This is mainly due to the behavior of the volume which is vanishing on coplanar nodes\footnote{This is only true for the version defined by Ashtekar and Lewandowski \cite{AshtekarLewand98}, not the one introduced by Smolin and Rovelli \cite{RovelliSmolin95}. }. But, in fact, $H$ only generates such nodes (see \figref{principal action}) so that a second Hamiltonian acts again only on the 'old' ones. \\
\begin{figure}
\centering
	\includegraphics{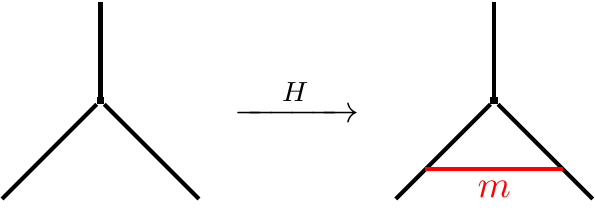}
\hspace*{40pt}
	\includegraphics{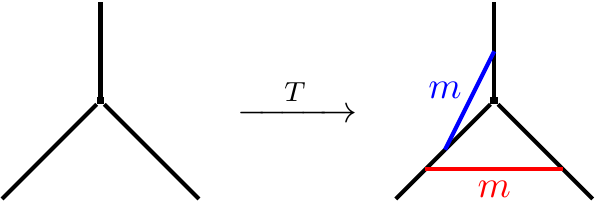}\
   \parbox{15cm}{\caption[]{\label{principal action} \small
The Euclidean constraint $H$ creates a new link (red) joining two edges at the same vertex due to the regularization of the curvature. Note, the new nodes are coplanar whereby two of the adjacent edges are even colinear. On the right hand side the principal action of $T$ is visualized: The red link is created by the first extrinsic curvature while the blue one is created by the second curvature operator that is regularized along a tetrahedron lying inside of the first one.}}
\end{figure}
The remaining part of the constraint $T[N]$ can be quantized similarly. However, the regularization is a bit more involved because the extrinsic curvature must be regularized separately. In principle $T$ adds two new links each of which is created by one operator $\hat{K}:=\frac{i}{l_p^2\,\gamma}[\hat{V},\hat{H}[1]]$. To insure that the full constraint is still anomaly free only coplanar nodes should be generated. That means, it must never happen  that the two new links have a common intersection. Because $\hat{H}$ is acting locally the second extrinsic curvature should be therefore regulated along tetrahedra lying inside of the first ones (see \figref{principal action}). The holonomies in $T$ can be regulated as above such that finally
\begin{gather}
\label{eqn:quantumT}
\hat{T}[N] \,T_s=\frac{2 i (s-\gamma^2) }{N^2_m\, \kappa\, l^6_p}\sum_{\vgraph\in\Gamma^{(0)}}\frac{8 N(\vgraph)}{E(\vgraph)}
\sum_{\vgraph(\Delta)=\vgraph} \epsilon^{ijk} \,\Tr\left( h^{(m)}_{s_i}[h^{(m)}_{s_i^{-1}},\hat{K}]h^{(m)}_{s_j}[h^{(m)}_{s_j^{-1}},\hat{K}] h^{(m)}_{s_k}[h^{(m)}_{s_k^{-1}},\hat{V}]\right)T_s~.
\end{gather}
In \eqref{eqn:quantumH} and \eqref{eqn:quantumT} the triangulation $\mathfrak{T}$ serves as a regulator. This regularization dependence can be removed by in a suitable operator topology \footnote{
On diffeo\-mor\-phism invariant states $\phi\in\mathcal{H}_{diff}\subset\mathcal{H}^{\ast}_{kin}$ the regulator dependence drops out trivially because two operators $\hat{S}$ and $\hat{S}'$ that are related by a refinement of the triangulation differ only in the size of the loops. This implies 
$
\langle \phi ,\hat{S}\psi \rangle = \langle \phi, \hat{S}' \psi \rangle~,
$
(see \cite{Thiemann96a} for details).}.

\section{Computational tools}
\label{tools}
In this section the tools for computing the matrix elements of the Hamiltonian constraint are introduced and some identities, which are important for the latter, are proven. 

\subsection{Graphical calculus}

The evaluation of the above constraint is mainly based on recoupling theory of $\SU(2)$. In this context it has proven beneficial to work with graphical methods. The calculus that is used in this article was introduced in \cite{ioantonia} and provides an extension of the methods in \cite{BrinkSatchler68} that is especially useful in LQG because it incorporates an easy treatment of (non-trivial) group-elements.
\subsubsection{Basic definitions}
In the following small Latin letters represent irreducible representations $j,\cdots\in\frac{1}{2}\N$ while Greek ones, $\alpha,\cdots=-j,-j+1,\cdots j$, are magnetic numbers. A state $\ket{j,\alpha}$ in the Hilbert space $\mathcal{H}_j$ is visualized by $\makeSymbol{\raisebox{.5\height}{\includegraphics{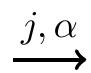}}}$, the adjoint $\bra{j,\alpha}$ is drawn as $\makeSymbol{\raisebox{.5\height}{\includegraphics{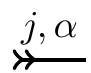}}}$ and a representation matrix $[R^{j}(g)]^{\alpha}_{\;\;\beta}$ with $g\in\SU(2)$ is pictured as
$
\makeSymbol{\raisebox{.4\height}{\includegraphics{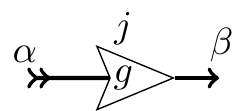}}}~.
$
Two lines carrying the same representation can be either connected by taking the scalar product, $\makeSymbol{\raisebox{.5\height}{\includegraphics{Graphics/lorentz1-figure0}}}\!\!\makeSymbol{\raisebox{.5\height}{\includegraphics{Graphics/lorentz1-figure1}}}=\sum\limits_{\alpha=-j}^j \scal{j,\alpha|j,\alpha}$, or by contraction with
\begin{align}
\label{eqn:dual}
\makeSymbol{\raisebox{.75\height}{\includegraphics{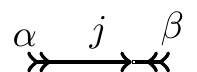}}}
			=\makeSymbol{\raisebox{.75\height}{\includegraphics{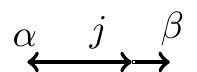}}}
			=\dtensor{j}{\beta}{\alpha}=(-1)^{j-\alpha}\delta_{\alpha,-\beta}~.
\end{align}
This two-valent intertwiner defines an isometry between the vector representation ($\tip$, tip) and the adjoint representation ($\feather$, feather) whereby the transformations
\[
\sum_{\beta} \dtensor{j}{\beta}{\alpha} \ket{j,\beta}=\bra{j,\alpha}\;\text{ and }\;
\sum_{\beta} \dtensor{j}{\alpha}{\beta}\bra{j,\beta}=\ket{j,\alpha}
\]
are graphically encrypted in $\makeSymbol{\raisebox{.7\height}{\includegraphics{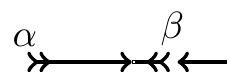}}}=\makeSymbol{\raisebox{.6\height}{\includegraphics{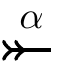}}}$ and $\makeSymbol{\raisebox{.7\height}{\includegraphics{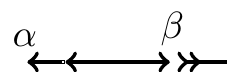}}}=\makeSymbol{\raisebox{.6\height}{\includegraphics{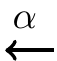}}}$. The second identity can be derived from the first one using
\beq
\label{eqn:idDual}
\makeSymbol{\raisebox{.75\height}{\includegraphics{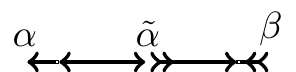}}}=\sum_{\tilde{\alpha}=-j}^j\dtensor{j}{\alpha}{\tilde{\alpha}}\dtensor{j}{\beta}{\tilde{\alpha}}
	=\delta^{\alpha}_{\beta}
	:=\makeSymbol{\raisebox{.75\height}{\includegraphics{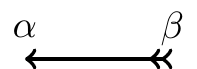}}}~.
\eeq
Note, the inner direction of \eqref{eqn:dual} is crucial here since it indicates the order of the magnetic indices that differ by a sign when it is interchanged due to $(-)^{j+\alpha}=(-)^{2j}(-)^{j-\alpha}$ and $(-)^{2(j-\alpha)}=1$. Consequently,
\begin{gather*}
\begin{gathered}
\makeSymbol{\raisebox{.35\height}{\includegraphics{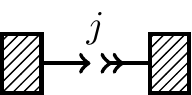}}}
\,=\,\makeSymbol{\raisebox{.35\height}{\includegraphics{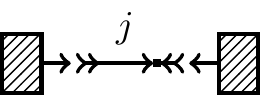}}}
=(-1)^{2j}\;\;\makeSymbol{\raisebox{.35\height}{\includegraphics{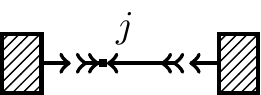}}}
= (-1)^{2j}\;\; \makeSymbol{\raisebox{.35\height}{\includegraphics{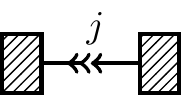}}}~.
\end{gathered}
\end{gather*}
Using \eqref{eqn:dual} and the properties of Wigner matrices it is also straightforward to prove the following identity:
\begin{gather}
\begin{gathered}
\label{eqn:invertG}
R^{\beta}\,_{\alpha}(g^{\scriptscriptstyle{-1}})=(-1)^{\alpha-\beta}R^{-\alpha}\,_{-\beta}(g)=\overline{R^\alpha\,_\beta(g)}\\[2pt]
\makeSymbol{\includegraphics{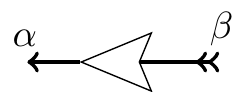}}
=\makeSymbol{\raisebox{.25\height}{\includegraphics{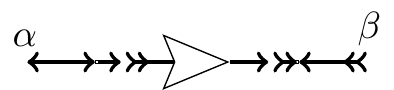}}}=
\makeSymbol{\raisebox{.25\height}{\includegraphics{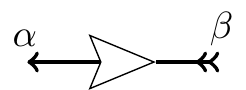}}}~.
\end{gathered}
\end{gather}
\subsubsection{Recoupling}
The basic building block of recoupling theory are the $3j$-symbols
\beq
\label{eqn:3-j}
\threej{a}{\alpha}{b}{\beta}{c}{\gamma}\equiv\makeSymbol{\includegraphics{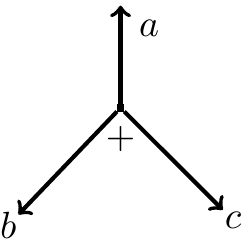}}= \makeSymbol{\includegraphics{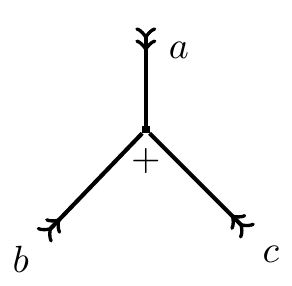}}
\eeq
that arise from coupling two irreps $a$ and $b$:
\begin{gather}
\begin{gathered}
\label{eqn:basicrule}
	(R^{a}(g))^{\alpha}\,_{\tilde{\alpha}}(R^{b}(g))^{\beta}\,_{\tilde{\beta}}
	=\sum_{c=|a-b|}^{a+b}(2c+1)\sum_{\gamma,\tilde{\gamma}=-c}^{c}
	\threej{a}{\tilde{\alpha}}{b}{\tilde{\beta}}{c}{\tilde{\gamma}}
	\threej{a}{\alpha}{b}{\beta}{c}{\gamma}\overline{(R^{c}(g))^{\gamma}\,_{\tilde{\gamma}}}\\[3pt]
\makeSymbol{\includegraphics{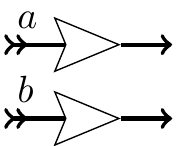}}\;\;=\sum_{c=|a-b|}^{a+b}(2c+1)
\makeSymbol{\includegraphics{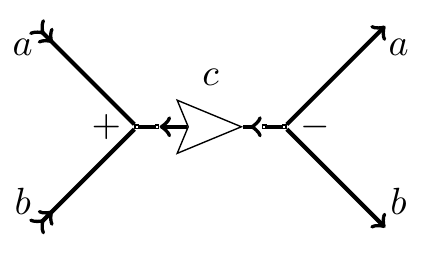}}~.
\end{gathered}
\end{gather}
As the tensor product $\mathcal{H}_a\otimes\mathcal{H}_b$ decomposes into $\bigoplus\limits_{c=|a-b|}^{a+b}\mathcal{H}_c$, a $3j$-symbol \eqref{eqn:3-j} is vanishing if $(a,b,c)$ are not compatible, i.e. if they don't obey $|a-b|\leq c\leq a+b$, and if $\alpha+\beta+\gamma\neq0$. Since $3j$'s are intertwiner (or equivariant maps) they commute with the group action. That means $[R^{a}(g)]^{\alpha}_{\alpha'}\iota^{\alpha'\beta\gamma}=\iota^{\alpha\beta'\gamma'} [R^{b}(g)]^{\beta}_{\beta'}[R^{c}(g)]^{\gamma}_{\gamma'}$ for any trivalent intertwiner $\iota^{\alpha\beta\gamma}$, which is encoded in 
\beq
\label{eqn:equivariant}
\makeSymbol{\includegraphics{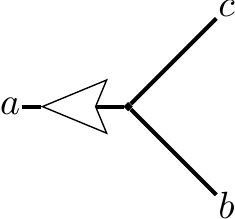}}
=\;
\makeSymbol{\includegraphics{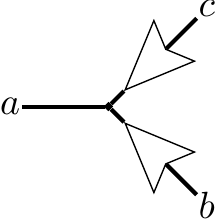}}~.
\eeq
Apart from that, these symbols are invariant under an even permutation of columns and related by $(-)^{a+b+c}$ for an odd permutation. For this reasons one has to assign an orientation to the nodes labeling anti-clockwise ($+$) and clockwise ($-$) order of the links (compare with \eqref{eqn:3-j}). Another important symmetry is 
\beq
\label{eqn:sym3j-2}
\threej{a}{\alpha}{b}{\beta}{c}{\gamma}=(-1)^{a+b+c}\threej{a}{-\alpha}{b}{-\beta}{c}{-\gamma}~.
\eeq
The trivalent intertwiners \eqref{eqn:3-j} are fundamental building blocks. All other intertwiners can be obtained from these ones. For example a node which has in- as well as outgoing links is constructed by multiplying with \eqref{eqn:dual}:
\begin{align}
\begin{split}
\label{eqn:ingoing}
\makeSymbol{\includegraphics{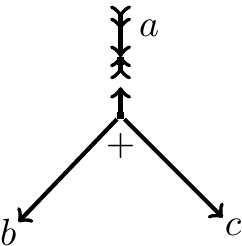}}
=(-1)^{a-\alpha}\threej{a}{-\alpha}{b}{\beta}{c}{\gamma}:=
\makeSymbol{\includegraphics{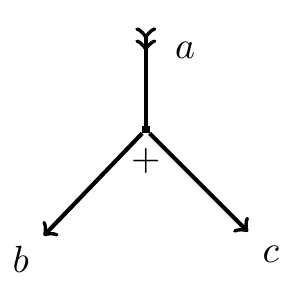}}~.
\end{split}
\end{align}
This together with \eqref{eqn:sym3j-2} proves the equivalence of a trivalent nodes whose links are all ingoing with one whose links are all outgoing, as it was claimed in \eqref{eqn:3-j}. Moreover, the intertwiner \eqref{eqn:ingoing} is of importance when coupling two holonomies with opposed orientation. In this case one finds with \eqref{eqn:invertG} and $d_c\equiv 2c+1$
\begin{align}
\label{eqn:basicOpposed}
\makeSymbol{\includegraphics{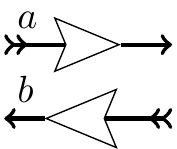}}\;\;\overset{\eqref{eqn:invertG}}{=}\;\;
\makeSymbol{\includegraphics{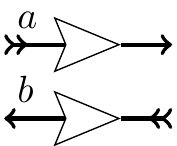}}\;
=(-1)^{2a}\sum_{c=|a-b|}^{a+b}d_{c}
\makeSymbol{\includegraphics{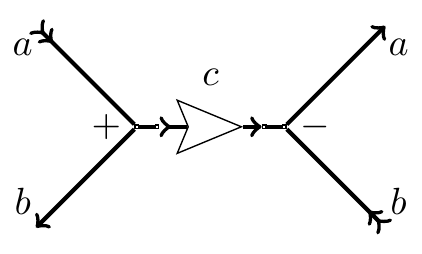}}~.
\end{align}
A four-valent node arises from the contraction of two $3j$-symbols:
\beq
\label{eqn:4-valent}
\sum_{\chi=-x}^{x} \threej{a}{\alpha}{b}{\beta}{x}{\chi}(-1)^{x-\chi}\threej{x}{-\chi}{c}{\gamma}{d}{\delta}=\makeSymbol{\includegraphics{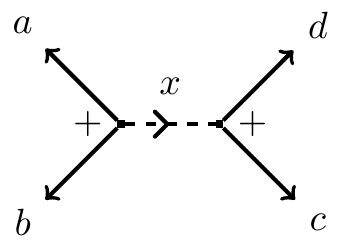}}~.
\eeq
The internal leg $x$ is drawn as a dashed line to emphasize that it is not a 'true' edge in the sense that the leg does not have a real extension in $\Sigma$ but corresponds to a point and consequently can not cary holonomies. Higher valent nodes are obtained similarly by adding more and more internal (dashed) lines.\\
The advantage of the intertwiners built above is that they provide an orthogonal basis in the space of invariants $\mathrm{Inv}[\bigotimes\limits_{\text{ingoing}} \mathcal{H}^{\ast}_j\otimes\bigotimes\limits_{\text{outgoing}} \mathcal{H}_k]$ where $^{\ast}$ denotes the adjoint. The $3j$-symbols are even normalized to one:  
\begin{gather}
\label{eqn:firstorthogonality}
\makeSymbol{\includegraphics{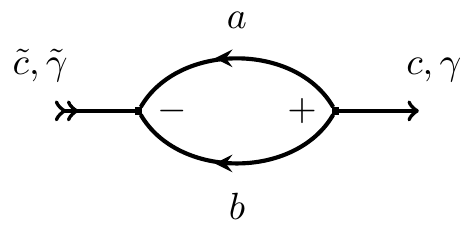}}=\sum_{\alpha,\beta}\threej{a}{\alpha}{b}{\beta}{\tilde{c}}{\tilde{\gamma}}\threej{a}{\alpha}{b}{\beta}{c}{\gamma}=\frac{1}{d_c} \delta_{c,\tilde{c}}\;\delta^{\tilde{\gamma}}\,_{\gamma}=\frac{1}{d_c}
\makeSymbol{\raisebox{.75\height}{\includegraphics{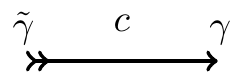}}}\\\nonumber
\implies\quad\makeSymbol{\raisebox{0\height}{\includegraphics{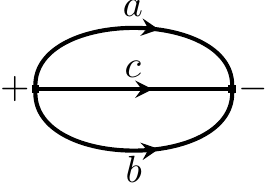}}}\; =1
~.
\end{gather}
While $\mathrm{Inv}[\mathcal{H}_a\otimes\mathcal{H}_b\otimes\mathcal{H}_c]$ is one dimensional the dimension of an $n$-valent intertwiner space equates the number of possible labelings of the internal legs. For example, $x$ in \eqref{eqn:4-valent} subscripts a basis\footnote{This basis is orthogonal but \emph{not normalized to one}. The contraction of two intertwiners $\iota_x,\iota_y$ yields $\frac{\delta_{x, y}}{d_x}$ (signs can be always absorbed by adjusting orientations) which follows directly from \eqref{eqn:firstorthogonality}.} of the four-valent space. Instead of coupling $a,b$ and $d,c$ in \eqref{eqn:4-valent} it is equally well allowed to chose a different coupling scheme, e.g $b,c--d,a$, corresponding to a different basis. These different bases are related by \emph{$6j$-symbols}:
\beq
\label{eqn:6j}
\makeSymbol{\includegraphics{Graphics/lorentz1-figure32}}=\sum_{y} d_y\sixj{a}{c}{b}{d}{x}{y} \makeSymbol{\includegraphics{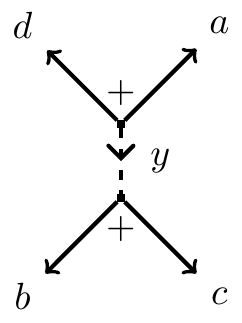}}~.
\eeq
In a similar manner coupling schemes of five spins are related by $9j$-symbols (see appendix \ref{app:recoupling}), schemes of six spins are related by $12j$'s and so forth.

\subsubsection{Simplifications}
\label{ssec:simplGraph}
To avoid unnecessary complications a slightly simplified version of the calculus introduced above is used henceforth. We forgo to display magnetic indices if not explicitly necessary as we did before, abbreviate $\makeSymbol{\includegraphics{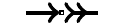}}$ by $\makeSymbol{\includegraphics{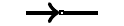}}$. In addition to that vertices are assumed to have anti-clockwise order. Only clockwise orientation is marked explicitly by a label $-$ (sometimes $+$-signs are kept for clarity). Furthermore, true edges of a spin-net are simply drawn as solid lines without explicitly showing the dependence on the group elements (triangles). Like above dashed lines are helplines to follow the coupling at a node and do not have real extension in $\Sigma$.\\
Even though this simplifies the diagrams there are certain aspects which have to be respected when evaluating the action of an operator on a spin-net. Obviously holonomies are always assigned to edges in $\Sigma$ so that they can only be coupled via \eqref{eqn:basicrule} or \eqref{eqn:basicOpposed} if they share (a part of) a solid edge. Contra wise, dashed parts can be modified arbitrarily or even removed (see below for examples) as long as they live at the same node\footnote{The only case where a solid line can be removed is when through the coupling of other holonomies the coloring of the edge is changed to the trivial representation.}.\\
\label{ex:simplGraph}
Nevertheless, solid parts can be transformed into dashed ones by the action of an operator as will be explained in the following example. Consider the action of $\Tr[\cdots h^{(m)}_{s_e}\hat{O}h^{(m)}_{s_e}\,^{-1}]$ on a node $A$ (box in the graphic below) where $s_e$ is a part of an edge $e$ emanating from the node and $\hat{O}$ is an arbitrary operator not depending on a holonomy along $s_e$. To start with, the first holonomy $h^{(m)}_{s_e}\,^{-1}$ is coupled to $s_e$ via \eqref{eqn:basicOpposed}:
\[
\makeSymbol{\raisebox{.3\height}{\includegraphics{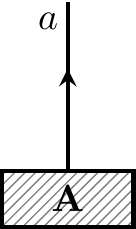}}}\quad
\xrightarrow{\;\;h^{(m)}_{s_e}\,^{-1}\;\;} \quad
\sum_{b} d_{b}\, (-)^{2 a}\;
\makeSymbol{\raisebox{.3\height}{\includegraphics{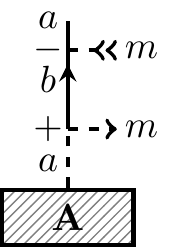}}}~.
\]
Here $\tip\, m$ and $m \feather$ represent the magnetic indices of $h^{(m)}_{s_e}\,^{-1}$ which are not contracted yet. The operator $\hat{O}$ will in general change the intertwiner, which now consist  of $A$ \emph{and} the dashed edge $a$. After evaluating $\hat{O}$ the holonomy $h^{(m)}_{s_e}$ is multiplied as the tip of $h^{(m)}_{s_e}$ is contracted with the feather of $h^{(m)}_{s_e}\,^{-1}$ so that
\[
\cdots\xrightarrow{\;h^{(m)}_{s_e}\hat{O}\;}\;
\sum_{b} d_{b}\,(-)^{2a}\sum_{c} O^a_c\;
\makeSymbol{\raisebox{.3\height}{\includegraphics{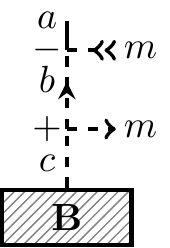}}}
\quad\xrightarrow{\;\Tr[\cdots\;}\quad
\sum_{b} d_{b}\,(-)^{2a}\sum_{c} O^a_c \;
\makeSymbol{\raisebox{.3\height}{\includegraphics{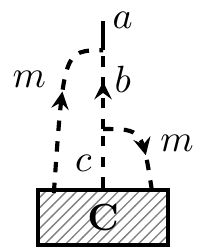}}}~.
\]
The solid part $b$ is turned into a dashed part since first of all going along $s_e^{-1}$ and then along $s_e$ pulls $m\, \feather$ back to the node and secondly the group element can be removed of $h^{(m)}_{s_e}$ and $b$ due to \eqref{eqn:equivariant}. After all other components of the operator have been applied the trace can be closed merging the remaining $\tip$ and $\feather$.  

\subsection{Important identities}
\label{ssec:Imp_Id}
It will now be demonstrated how to work with the above calculus by means of specific examples, which will be important in what follows. A useful technique to simplify complicated couplings is to insert a resolution of identity at the intertwiner space. For example contracting the vertex on the left of equation
\beq
\label{eqn:6jmove}
\makeSymbol{\includegraphics[scale=0.8]{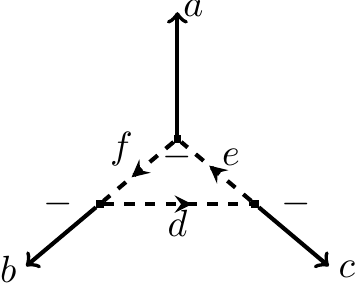}}
=\makeSymbol{\includegraphics[scale=0.8]{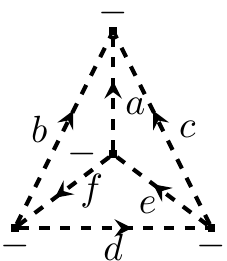}}\quad
\makeSymbol{\includegraphics[scale=0.8]{Graphics/lorentz1-figure23}}
\eeq
with a trivalent vertex yields a tetrahedron that equals $\tinysixj{a}{d}{b}{e}{c}{f}$. This relation in combination with \eqref{eqn:6j} proves to be very handy for the evaluation of the following diagram:
\begin{gather}
\nonumber
\makeSymbol{\includegraphics[scale=0.8]{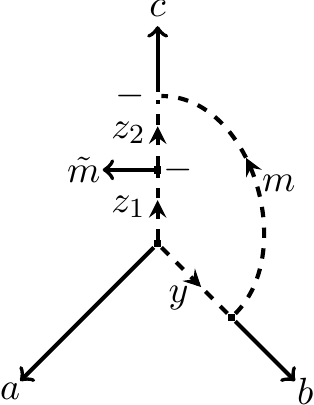}}
\overset{\eqref{eqn:6j}}{=} 
\sum_{z_3} d_{z_3}(-)^{2 z_1} (-)^{z_3+c+\tilde{m}}\sixj{z_1}{c}{z_2}{z_3}{\tilde{m}}{m} 
\makeSymbol{\includegraphics[scale=0.8]{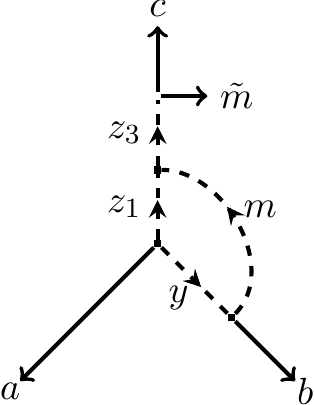}}\\
\label{eqn:move1}
\overset{\eqref{eqn:6jmove}}{=}
\sum_{z_3} d_{z_3} (-)^{2y}(-)^{a+b+c+\tilde{m}}\sixj{z_1}{c}{z_2}{z_3}{\tilde{m}}{m} \sixj{a}{m}{b}{z_1}{z_3}{y} 
\makeSymbol{\includegraphics[scale=0.8]{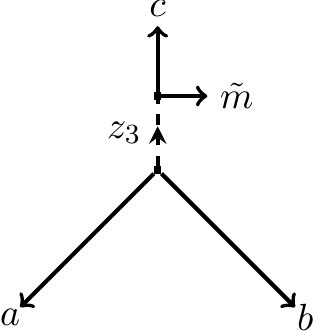}}
\end{gather}
The signs arise from adjusting the orientation of edges and nodes. E.g. to apply \eqref{eqn:6jmove} the orientation of the link $z_3$ in the second graphic has to be flipped and all nodes involved must be labeled by $-$ instead of $+$ that gives $(-)^{4 z_3+2(y+m)} (-)^{a+b+z_3}=(-)^{2(y+m)} (-)^{a+b+z_3}$. Since $4j$ is an even integer for any spin $j$ and $(-)^{2z_1}=(-)^{2(m+z_3)}$ because $(z_1,z_3,m)$ are compatible, the final sign reduces to $(-)^{2y}(-)^{a+b+c+\tilde{m}}$. \\
Instead of interchanging $m$ and $\tilde{m}$ one could have also used \eqref{eqn:6j} to move $\tip\, \tilde{m}$ to the edge $a$ before applying \eqref{eqn:6jmove}. Following this procedure one finds
\beq
\label{eqn:move2}
\makeSymbol{\includegraphics[scale=0.8]{Graphics/lorentz2-figure0}}
=\sum_{x}d_x(-)^{x+y+z_1+z_2+m+b-a} \sixj{z_1}{a}{z_2}{x}{\tilde{m}}{y}\sixj{x}{m}{b}{z_2}{c}{y}
\makeSymbol{\includegraphics[scale=0.8]{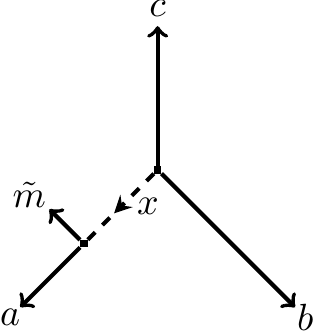}}~.
\eeq
On the other hand, the basis must be changed at least two times if $\tip\,\tilde{m}$ should be finally aligned to $b$. Thus,
\beq
\label{eqn:move3}
\makeSymbol{\includegraphics[scale=0.8]{Graphics/lorentz2-figure0}}=\sum_{\tilde{y}}d_{\tilde{y}}(-)^{2(a+y)}(-)^{b+\tilde{y}+\tilde{m}} \ninej{k}{a}{\tilde{y}}{m}{y}{b}{z_2}{z_1}{\tilde{m}}
\makeSymbol{\includegraphics[scale=0.8]{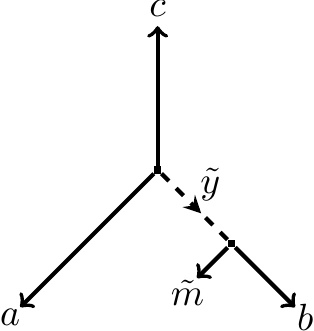}}
\eeq
where the sum over the resulting three $6j$'s (one from \eqref{eqn:6jmove}) can be summarized in a \emph{$9j$-symbol}\footnote{See appendix \ref{app:recoupling} for more details} :
\beq
\label{9jdef}
\ninej{a}{f}{r}{d}{q}{e}{p}{c} {b}:=\sum_x d_x (-1)^{2x} \sixj{a}{c}{b}{d}{x}{p} \sixj{c}{e}{d}{f}{x}{q} \sixj{e}{a}{f}{b}{x}{r}~.
\eeq
It is often easier to follow the signs when one uses \eqref{eqn:basicrule} (or \eqref{eqn:basicOpposed}) and \eqref{eqn:6jmove} instead of \eqref{eqn:6j} as it is demonstrated in the subsequent example\footnote{Since in this specific case $\tilde{m}$ is an integer, changing the orientation of the link $\tilde{m}$ does not contribute a sign which is why there is no arrow assigned to the link.}:
\[
\makeSymbol{\includegraphics[scale=.85]{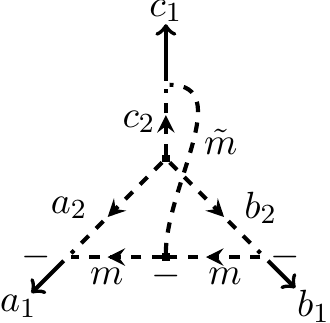}}
\overset{ \eqref{eqn:basicrule}}{=}\sum_{x} d_{x}
\makeSymbol{\includegraphics[scale=.85]{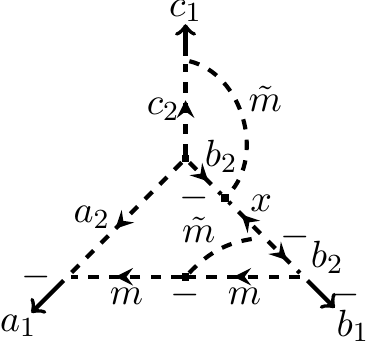}}
\]
Removing the line $\tilde{m}$ between the upper link and the one on the right leads indeed to the same $6j$ as \eqref{eqn:6j}. After eliminating the legs labeled by $\tilde{m}$ in the right diagram the link $m$ is vanishing as well due to \eqref{eqn:6j}. The resulting $6j$'s can again be summed up to a $9j$:
\begin{gather}
\label{eqn:trick1}
\makeSymbol{\includegraphics[scale=.85]{Graphics/lorentz2-figure5}}
=(-1)^{b_1+b_2+\tilde{m}+m}
\ninej{c_1}{\tilde{m}}{c_2}{b_1}{m}{b_2}{a_1}{m}{a_2}
\makeSymbol{\includegraphics[scale=.8]{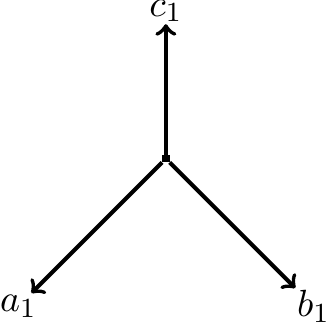}}~.
\end{gather}
Apart from the above examples couplings of the form
\[
\makeSymbol{\includegraphics[scale=0.8]{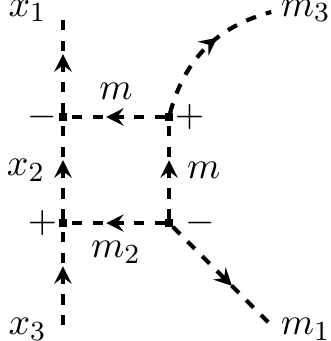}}
\]
are encountered in the subsequent analysis. Coupling the link $m_3$ in a suitable way (here with \eqref{eqn:basicOpposed}) yields 
\beq
\label{eqn:trick2}
\sum_{m_4,x_{4}} d_{m_4}d_{x_4}(-)^{2x_{3}}
\makeSymbol{\includegraphics[scale=0.8]{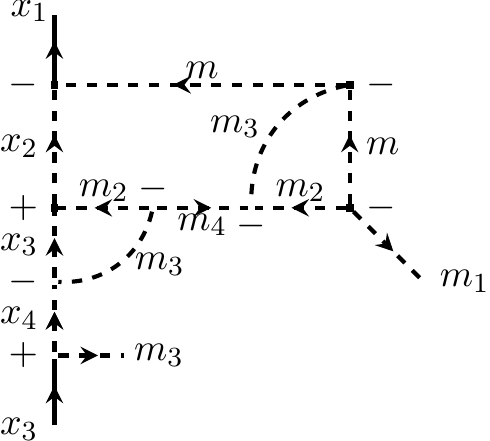}}=
\sum_{x_4}d_{x_4}(-)^{x_3+x_4+m_2+m_1+m}
\ninej{m_2}{x_2}{x_3}{m_1}{x_1}{x_4}{m}{m}{m_3}
\makeSymbol{\includegraphics[scale=0.8]{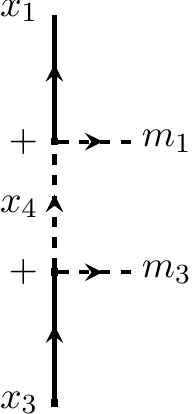}}
\eeq
where $\eqref{eqn:6jmove}$ was utilized first to take away the two legs $m_3$ and afterwards $x_2$.

\subsection{Action of the volume}
\label{ssec:volume}
Even though the volume can not be computed analytically for generic configurations it is comparatively easy to calculate it for gauge variant trivalent vertices transforming in a low spin. This is exactly the type of nodes that are of interested here. Nevertheless, we will not explicitly calculate this matrix elements in this section but only summarize some generic properties. More details can be found in appendix \ref{app:Volume}. \\ 
Non-invariant trivalent intertwiners can be treated as four-valent (invariant) ones, e.g.
\[
\makeSymbol{\includegraphics[scale=0.7]{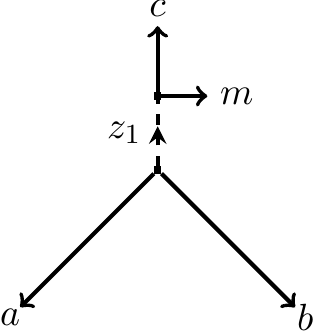}}~,
\]
where one leg, here $m$, does not correspond to a true edge but indicates in which representation the node is transforming. In a slight abuse of the conventions these open ends $m$ are drawn as solid rather than dashed lines for better visibility. Whether a leg at the intertwiner corresponds to a part of a true edge or is indicating gauge variance follows from the context.\\
In general $\hat{V}$ is altering the intertwiner structure, so that 
\begin{equation}
\label{eqn:voume1}
\hat{V}
\makeSymbol{\includegraphics[scale=0.7]{Graphics/lorentz2-figure11}}
:=\sum_{z_2}V_{z_2}^{z_1}(a,b|c;m)
\makeSymbol{\includegraphics[scale=0.7]{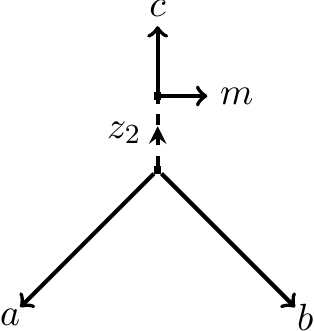}}~.
\end{equation}
Note, the above definition of the matrix elements $V_{z_2}^{z_1}$ \emph{depends crucially on the order of the node as well as the orientation of the edges}. Consider for example a node whose legs $a$ and $b$ are interchanged. Switching the legs back into the original position before acting with $\hat{V}$ gives a sign $(-)^{z_1+a+b}$ while exchanging them after the volume has been applied yields $(-)^{z_2+a+b}$ so that $V_{z_2}^{z_1}(a,b|c;m)= (-)^{z_1-z_2}V_{z_2}^{z_1}(b,a|c;m)$. 
\section{Action of the Euclidean constraint}
\label{sec:main1}
The action of the Euclidean constraint on trivalent nodes was determined for the first time in \cite{Gaul} and \cite{Borissovetal97} using Temperely-Lieb algebras and recalculated in \cite{ioantonia} by graphical methods similar to the one introduced above. In this section a powerful trick is presented that hugely simplifies this calculation.

\subsection{Action on gauge-invariant trivalent vertices}

For a single trivalent vertex with adjacent edges $e_i,e_j,e_k$ the Euclidean constraint \eqref{eqn:quantumH} reduces to  
\beq
\label{eqn:eucliTr}
\frac{4i}{N^2_m\,\kappa\, l^2_p}\epsilon^{ijk}\;\Tr_m\left[(h_{\alpha_{ij}}-h_{\alpha_{ji}})h_{s_k}[h_{s_k^{-1}},V]\right]
\eeq
where the Lapse $N(\vgraph)$ was set to one and the subscript $m$ in $\Tr_m$ indicates that the holonomies have spin $m$. Moreover, since $\Tr_m[h_{\alpha_{ij}}]=\Tr_m[(h_{\alpha_{ij}})^{-1}]=\Tr_m[h_{\alpha_{ji}}]$ expression \eqref{eqn:eucliTr} reduces further to\footnote{This argument applies to \emph{all} intertwiners so that the commutator can be always replaced by $h_{s_k}V h_{s_k^{-1}}$.}
\beq
\label{eqn:E-trace}
\frac{4i}{N^2_m\,\kappa\, l^2_p}\epsilon^{ijk}\;\Tr_m\left[(h_{\alpha_{ij}}-h_{\alpha_{ji}})h_{s_k}Vh_{s_k^{-1}}\right]~.
\eeq
To begin with, the holonomy $h_{s_k}^{-1}$ has to be coupled to the edge $e_k$ what leads to a gauge variant vertex because the magnetic indices are not contracted yet. Consequently,
\beq
\label{eqn:eh1}
V h_{s_k}^{-1}\!\!\!\makeSymbol{\includegraphics[scale=0.7]{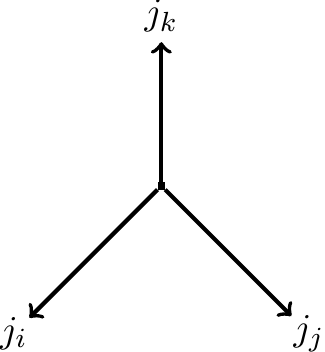}}
= V \sum_{c_1} d_{c_1} (-)^{2j_k} \!\!\!
\makeSymbol{\includegraphics[scale=0.7]{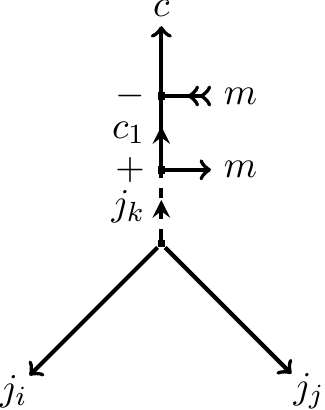}}
= \sum_{c_1,c_2} d_{c_1} (-)^{2j_k}\, V_{c_{2}}^{j_{k}}(j_i,j_j|c_{1};m)\!
\!\!
\makeSymbol{\includegraphics[scale=0.7]{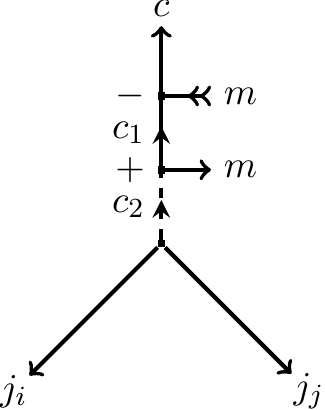}}~.
\eeq
The action of the remaining holonomies in \eqref{eqn:E-trace} can be simplified by
\begin{theorem*}[{\bf Loop trick}]
\vspace*{-5pt}\mbox{}
\beq
\vspace*{-5pt}
\label{eqn:looptrick}
\frac{1}{2}(h_{\alpha_{ij}}-h_{\alpha_{ji}})h_{s_k}= \epsilon(i,j,k) \sum_{\tilde{m}\in\left(2\mathbb{N}+1\right)}d_{\tilde{m}}\,(-)^{2m}
\makeSymbol{\includegraphics[scale=0.8]{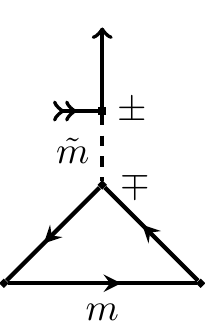}}
\eeq	
where $\epsilon(i,j,k)=1$ if the edges $e_i,e_j,e_k$ are ordered anti-clockwise, otherwise it equals  $-1$. 
\end{theorem*}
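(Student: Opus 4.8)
The plan is to reduce the left-hand side to an antisymmetrisation of a single loop holonomy and then to read off the surviving representations from the exchange symmetry of the coupling $3j$-symbol.

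First I would note that $\alpha_{ji}$ is simply $\alpha_{ij}$ traversed backwards: since $\alpha_{ij}=s_i\circ a_{ij}\circ s_j^{-1}$ and the connecting arc obeys $a_{ji}=a_{ij}^{-1}$, one finds $h_{\alpha_{ji}}=h_{s_j}h_{a_{ij}}^{-1}h_{s_i}^{-1}=(h_{\alpha_{ij}})^{-1}$. Hence the object to be rewritten is the antisymmetric part $\tfrac12\big(h_{\alpha_{ij}}^{(m)}-(h_{\alpha_{ij}}^{(m)})^{-1}\big)h_{s_k}^{(m)}$ of the spin-$m$ loop holonomy, with $h_{s_k}^{(m)}$ appearing only as a spectator that is carried unchanged into the final diagram.

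Next I would expand the loop graphically as $h_{\alpha_{ij}}^{(m)}=h_{s_i}^{(m)}h_{a_{ij}}^{(m)}(h_{s_j}^{(m)})^{-1}$. The two spin-$m$ holonomies on the segments $s_i\subset e_i$ and $s_j\subset e_j$ share solid edges with the outgoing links, so they may be coupled to them and to each other through the arc. Joining the two $m$-legs by means of \eqref{eqn:basicrule}/\eqref{eqn:basicOpposed} inserts a resolution of the identity on $\mathcal H_m\otimes\mathcal H_m=\bigoplus_{\tilde m}\mathcal H_{\tilde m}$; this produces the sum $\sum_{\tilde m}d_{\tilde m}$ and, via the opposed-orientation rule \eqref{eqn:basicOpposed}, the prefactor $(-)^{2m}$. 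The internal link $\tilde m$ generated here is exactly the new edge displayed on the right-hand side.

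The crux is the parity selection. Reversing the loop sends each Wigner matrix to its inverse, which by \eqref{eqn:invertG} flips all arrows and negates every magnetic index. On the intermediate vertex coupling the two identical spins $m$ into $\tilde m$ this negation acts through \eqref{eqn:sym3j-2}, multiplying that vertex by $(-1)^{2m+\tilde m}$ (equivalently, it exchanges the two $m$-columns), while the corresponding reversals on the individual segments $s_i,s_j$ supply the compensating $(-)^{2m}$ contributions. The net relative sign between the forward and backward loops is therefore $(-1)^{\tilde m}$, so the antisymmetriser $\tfrac12\big(1-(-1)^{\tilde m}\big)$ annihilates the even channels and retains precisely the odd representations $\tilde m\in 2\mathbb N+1$. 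I expect this sign bookkeeping — together with fixing the overall orientation factor $\epsilon(i,j,k)$, which records whether the triple $(e_i,e_j,e_k)$ is ordered anti-clockwise and originates from the node-orientation convention of Section \ref{ssec:simplGraph} and the $(-1)^{a+b+c}$ rule for an odd permutation of a trivalent vertex — to be the main obstacle; the remaining factors $d_{\tilde m}$ and $(-)^{2m}$ are already delivered by \eqref{eqn:basicOpposed}.
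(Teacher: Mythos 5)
Your proposal is correct and follows essentially the same route as the paper's proof: the free ends of $(h_{\alpha_{ij}}-h_{\alpha_{ji}})h_{s_k}$ are joined artificially via \eqref{eqn:basicOpposed}, producing the factors $\sum_{\tilde m}d_{\tilde m}(-)^{2m}$ and the internal $\tilde m$ link, and the reversed loop is then shown to carry the relative sign $(-1)^{\tilde m}$ through \eqref{eqn:invertG} combined with an odd column permutation and \eqref{eqn:sym3j-2}, so that the antisymmetrisation $\tfrac12\bigl(1-(-1)^{\tilde m}\bigr)$ retains precisely the odd representations. The only difference is explicitness: where you announce the net sign, the paper carries out the magnetic-index computation directly on $\threej{\tilde m}{\gamma}{m}{\nu}{m}{-\mu}(-1)^{m-\mu}\bigl[R^m(g^{-1})\bigr]^{\nu}{}_{\mu}$ to confirm that the forward and backward loops differ by exactly $(-1)^{\tilde m}$.
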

\begin{proof} 
The 'free ends' of $(h_{\alpha_{ij}}-h_{\alpha_{ji}})h_{s_k}$ can be joined artificially by \eqref{eqn:basicOpposed} what is graphically encoded in 
\[
(h_{\alpha_{ij}}-h_{\alpha_{ji}})h_{s_k}=
\epsilon(i,j,k) \left(
\makeSymbol{\includegraphics[scale=0.8]{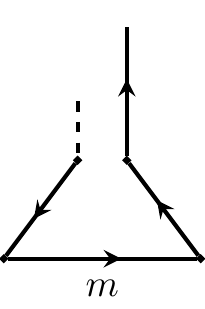}}
-
\makeSymbol{\includegraphics[scale=0.8]{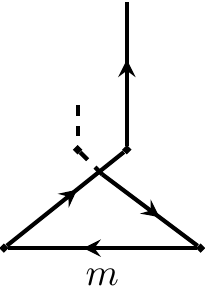}}\right)\\
=
\epsilon(i,j,k)
\sum_{\tilde{m}}d_{\tilde{m}} (-)^{2m}\left(
\makeSymbol{\includegraphics[scale=0.8]{Graphics/lorentz1-figure50}}
-
\makeSymbol{\includegraphics[scale=0.8]{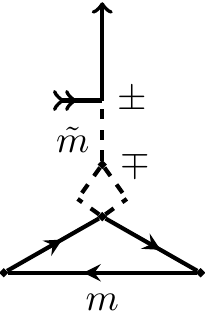}}
\right)~.
\]
Suppose $h_{ij}=[R^m(g)]^{\mu}\,_{\nu}$ then
\begin{gather*}
\makeSymbol{\includegraphics{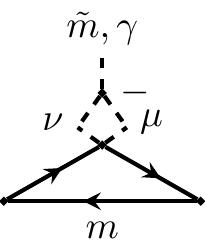}}
\equiv\sum_{\mu,\nu}\threej{\tilde{m}}{\gamma}{m}{\nu}{m}{-\mu} (-1)^{m-\mu}[R^m(g^{-1})]^{\nu}\,_{\mu}\\
=\sum_{\mu,\nu}(-1)^{2m+\tilde{m}}\threej{\tilde{m}}{\gamma}{m}{-\mu}{m}{\nu} (-1)^{m-\nu}[R^m(g)]^{-\mu}\,_{-\nu}\\
=(-1)^{\tilde{m}}\sum_{\mu,\nu}\threej{\tilde{m}}{\gamma}{m}{\mu}{m}{-\nu} (-1)^{m-\nu}[R^m(g)]^{\mu}\,_{\nu}
\equiv
(-)^{\tilde{m}}\makeSymbol{\includegraphics[scale=0.8]{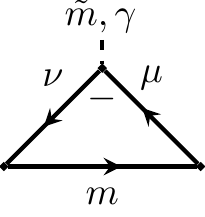}}~.
\end{gather*}
To pass from the first line to the second, relation \eqref{eqn:invertG} was invoked. The sign $(-)^{2m+\tilde{m}}$ originates from the permutation of the columns of the $3j$-symbol and is annihilated due to \eqref{eqn:sym3j-2} in the last line.\\
Inserting this in the first equality proves the theorem.
\end{proof}
As in the example on page \pageref{ex:simplGraph}, adding $h_{s_k}$ to \eqref{eqn:eh1} transforms the solid line $c_1$ into a dashed line so that with \eqref{eqn:looptrick} and $\tsum\limits_x:=\sum\limits_x d_x$ one obtains
\begin{gather}
\label{eqn:a}
\Tr_m[(h_{\alpha_{ij}}-h_{\alpha_{ji}})\cdots]
\!\!\!\makeSymbol{\includegraphics[scale=0.8]{Graphics/lorentz2-figure13}}\!\!\!\!
=2\,\epsilon(i,j,k)\!\!\!\!\!\tsum_{\stackrel{a,b,c_1;}{m_1\in\left(2\mathbb{N}+1\right)}} \!\!\sum_{c_2} 
(-)^{2b}\,V_{c_{2}}^{j_{k}}(j_i,j_j|c_1;m) \!\!\!\!\!\!\!\!\!
\makeSymbol{\includegraphics[scale=0.75]{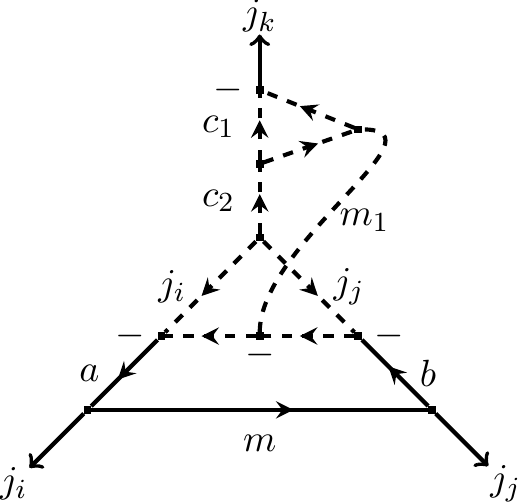}}\!\!\!\!\!\!\\
\label{eqn:b}
=2\,\epsilon(i,j,k)\!\!\!\!\!\tsum_{\stackrel{a,b,c_1;}{m_1\in\left(2\mathbb{N}+1\right)}}\!\! \sum_{c_2} 
V_{c_{2}}^{j_{k}}(j_i,j_j|c_1;m)  (-)^{j_j-b+c_2-c_1}
\sixj{m}{j_k}{m}{c_2}{m_1}{c_1}
\ninej{a}{j_i}{m}{b}{j_j}{m}{j_k}{c_2}{m_1}
\makeSymbol{\includegraphics[scale=0.75]{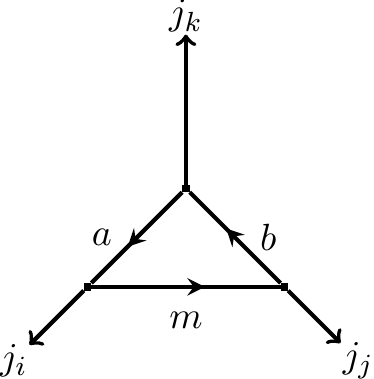}}\!\!\!\!\!\!
\end{gather}
In \eqref{eqn:a} the sign $(-)^{2(j_k+m)}$ stemming from coupling $h_{s_k^{-1}}$ and from \eqref{eqn:looptrick} is canceled by $(-)^{2m}$, which arises from coupling the part $h_{s_j^{-1}}$ of $h_{\alpha_{ij}}$, and by $(-)^{2a}=(-)^{2(b+j_k)}$ that is produced by the switch of the direction of $a$ after coupling $h_{s_i}$. The dashed parts of the diagram are removed by first making use of \eqref{eqn:6jmove} to get rid of the triangle $(m,m,c_1)$  and then of \eqref{eqn:trick1}. Parts of the signs resulting from this two moves were afterwards absorbed by an odd permutation of rows and columns of the $9j$-symbol\footnote{See appendix \ref{app:recoupling} for details}.\\
To obtain the full action of $\hat{H}$ the above expression has to be contracted with $\epsilon^{ijk}$ so that 
\begin{gather*}
\hat{H}\!\!\!\makeSymbol{\includegraphics[scale=0.7]{Graphics/lorentz2-figure13}}=
\tsum_{a,b} H^{j_i j_j}_{\,a\; b}(j_k) \!\!\!\makeSymbol{\includegraphics[scale=0.7]{Graphics/lorentz2-figure17}}\!\!\!
+\,\tsum_{b,c}H^{j_j j_k}_{\,b\; c}(j_i)\!\!\!\makeSymbol{\includegraphics[scale=0.7]{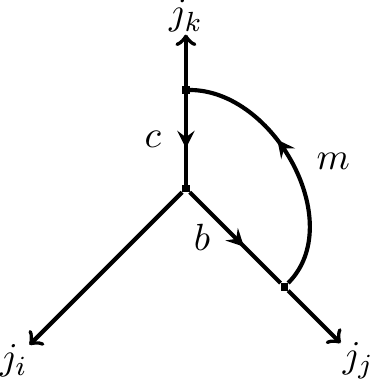}}\!\!\!
+\,\tsum_{c,a}H^{j_k j_i}_{\,c\; a}(j_j)\!\!\!\makeSymbol{\includegraphics[scale=0.7]{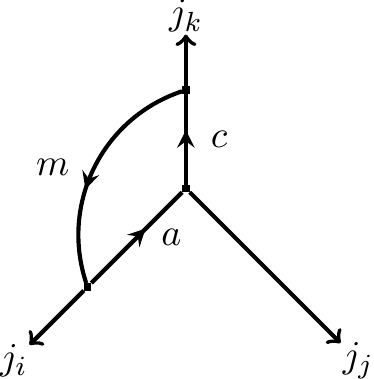}}\!\!\!
\end{gather*}
with 
\beq
\label{eqn:matH}
H^{j_i j_j}_{\,a\; b}(j_k)=\frac{8i}{N^2_m\,\kappa\, l^2_p}\;\epsilon(i,j,k)\!\!\!\!\tsum\limits_{\stackrel{c_1;}{m_1\in\left(2\mathbb{N}+1\right)}} \!\sum\limits_{c_2} 
V_{c_{2}}^{j_{k}}(j_i,j_j|c_1;m)  (-)^{j_j-b+c_2-c_1}
\tinysixj{m}{j_k}{m}{c_2}{m_1}{c_1}
\tinyninej{a}{j_i}{m}{b}{j_j}{m}{j_k}{c_2}{m_1}~.
\eeq
As for the volume it is crucial to respect the orientation of the nodes and edges in the above formula. For example:
\[
\hat{H} \makeSymbol{\includegraphics[scale=0.7]{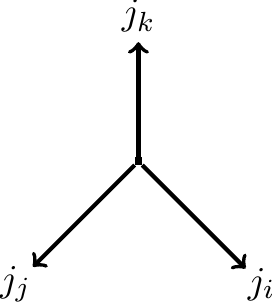}}= 
\tsum_{a,b} H^{j_j j_i}_{\,b\; a}(j_k)\makeSymbol{\includegraphics[scale=0.7]{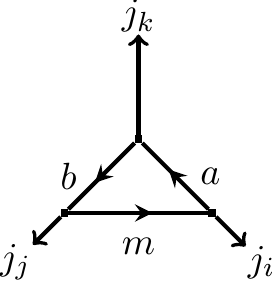}}
+\cdots
\]
Swapping the legs $j_i$ and $j_j$ back into the original position before applying $\hat{H}$ generates a sign $(-)^{j_i+j_j+j_k}$ and switching them afterwards again contributes $(-)^{a+b+j_k} (-)^{a+j_i+m}  (-)^{b+j_j+m}$ for the nodes and $(-)^{2(a+b+m)}$ for reorienting the loop so that in total\footnote{The same conclusion can be reached by directly analyzing \eqref{eqn:matH}.} $H^{j_i j_j}_{\,a\; b}(j_k)=H^{j_j j_i}_{\,b\; a}(j_k)$. This results seems to be astonishing at the first sight since $\hat{H}$ is antisymmetric. However, if $(j_i,a)$ and $(j_j,b)$ are exchanged in the matrix element then $\alpha_{ij}$ changes its direction but also the legs of the nodes are twisted so that in total this is nothing else than a relabeling and therefore should not change the value. 
\subsection{Action on gauge-variant nodes}
\label{ssec:Hvarinat}
Below we will have to calculate the extrinsic curvature of gauge-variant nodes which is why we also have to evaluate the action of $\hat{H}$ on nodes of this kind (here: a trivalent one transforming in spin $m_1$). To start with we apply the following trick:
\begin{gather}
\begin{gathered}
\label{eqn:vh}
\makeSymbol{\includegraphics[scale=0.7]{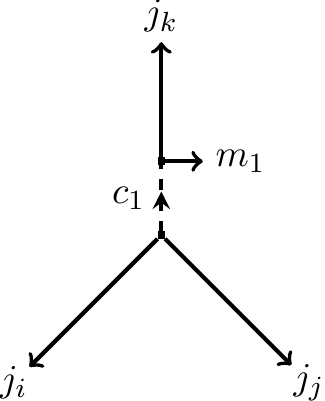}}
\xrightarrow{\;h_{s_k^{-1}}^{(m)}\;} \tsum_{z_2} (-)^{2j_k}
\makeSymbol{\includegraphics[scale=0.7]{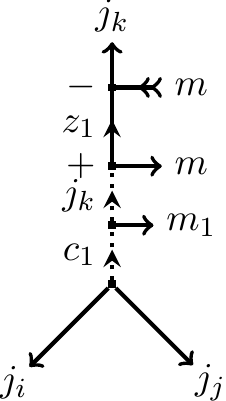}}\\
=\tsum_{z_1,m_2} (-)^{2j_k}
\makeSymbol{\includegraphics[scale=0.7]{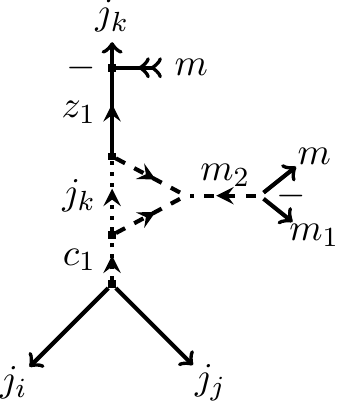}}
=\tsum_{z_1,m_2}(-)^{2 (j_k+m_1)}(-)^{c_1+z_1+m_2} \sixj{m}{c_1}{m_1}{z_1}{m_2}{j_k} 
\makeSymbol{\includegraphics[scale=0.7]{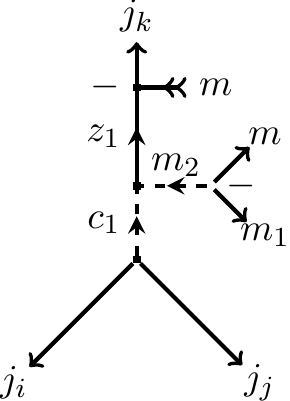}} ~.
\end{gathered} 
\end{gather}
Here, neither $\tip\,m$ nor $\tip\,m_1$ corresponds to true edges. Rather they indicate that the node is transforming in $\mathcal{H}_m\otimes\mathcal{H}_{m_1}$. Since the volume acts as a derivative operator and therefore only 'grasps'  true edges (see appendix \ref{app:Volume}) it is advisable to glue $\tip\,m$ and $\tip\,m_1$ together, taking advantage of \eqref{eqn:basicrule} and \eqref{eqn:6jmove}. Instead of having to evaluate $\hat{V}$ on the node with two open links it is now possible to simply treat it as before acting on a node with just one open leg $m_2\in \{|m-m_1|,\cdots, m+m_1\}$. Together with\eqref{eqn:looptrick} this yields
\begin{gather*}
\frac{1}{2}\Tr_m[(\alpha_{ij}-\alpha_{ji})h_{s_k}\hat{V}h_{s_k^{-1}}]\makeSymbol{\includegraphics[scale=0.7]{Graphics/lorentz2-figure20}} =\\
\epsilon(i,j,k)
\tsum_{\stackrel{z_1,m_2}{a,b}}\tsum_{m_3\in 2\N+1}\sum_{z_2}(-)^{c_1+z_1+m_2+2b} \sixj{m}{c_1}{m_1}{z_1}{m_2}{j_k}
 V_{z_2}^{c_1}\left(j_i,j_j|z_1; m_2\right)
\makeSymbol{\includegraphics[scale=0.7]{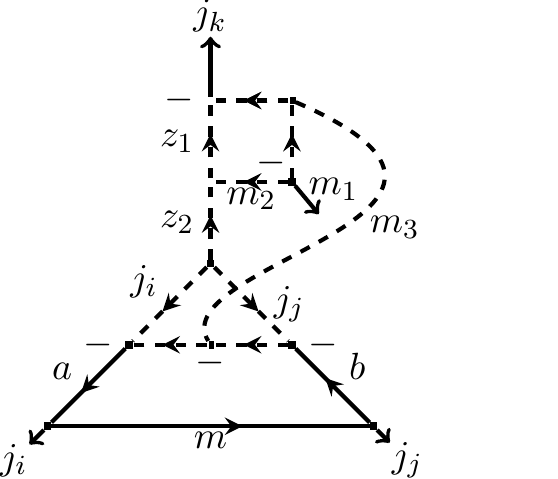}}
\end{gather*}
The dashed parts can be simplified exerting \eqref{eqn:trick1}, \eqref{eqn:trick2} and the symmetries of $9j$'s:
\begin{gather*}
\makeSymbol{\includegraphics[scale=0.7]{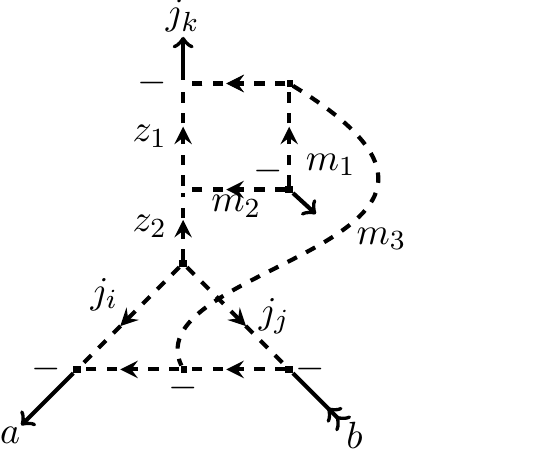}}\!\!\!\overset{\eqref{eqn:trick2}}{=}\,
\tsum_{c_2} (-)^{2z_1} (-)^{z_2+m+m_1+m_2-c_2}
\ninej{m_2}{z_1}{z_2}{m_1}{j_k}{c_2}{m}{m}{m_3}
\makeSymbol{\includegraphics[scale=0.7]{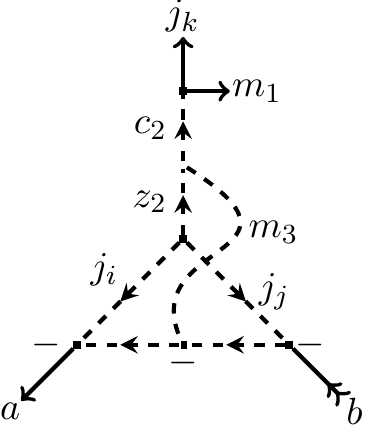}}
\\
\overset{\eqref{eqn:trick1}}{=}\tsum_{c_2} (-)^{2m} (-)^{j_i+a+m_1-m_2}
\ninej{m_2}{z_1}{z_2}{m_1}{j_k}{c_2}{m}{m}{m_3}
\ninej{j_i}{j_j}{z_2}{a}{b}{c_2}{m}{m}{m_3}
\makeSymbol{\includegraphics[scale=0.7]{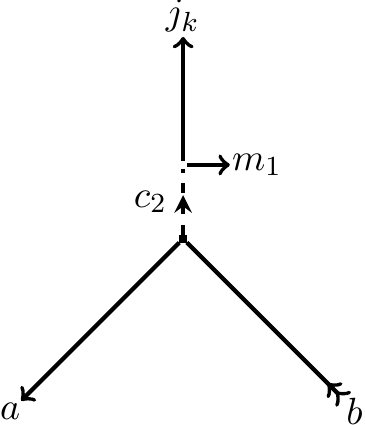}}
\end{gather*}
With the remaining summands of $H$ one can proceed similarly if the basis of the intertwiner is changed before the trace is evaluated such that $\tip \,m_1$ is always assigned to the edge facing the loop. The full amplitude is then given by 
\begin{align}
\begin{split}
&H\makeSymbol{\includegraphics[scale=0.7]{Graphics/lorentz2-figure20}}= 
\tsum_{a,b,c_2} H^{j_i j_j; c_1}_{\,a\;b;\;c_2}(j_k;m_1)
\makeSymbol{\includegraphics[scale=0.7]{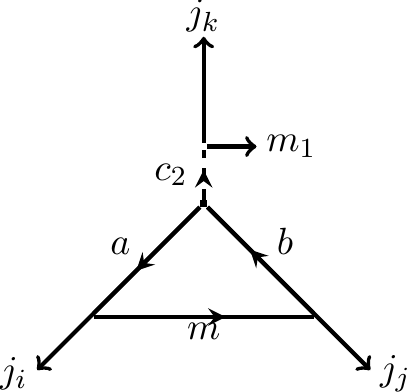}}\\
&+\tsum_{\stackrel{b,c_2}{a_1,a_2}} (-)^{2m_1}(-)^{j_i+j_j+c_1} \sixj{a_1}{c_1}{j_j}{m_1}{j_k}{j_i}H^{j_j j_k; a_1}_{b\,c_2;\;a_2}(j_i;m_1)
\makeSymbol{\includegraphics[scale=0.7]{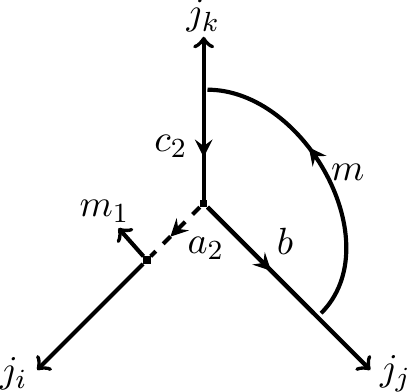}}\\
&+\tsum_{\stackrel{c_2,a}{b_1,b_2}} (-)^{2m_1}(-)^{j_i+b_1+j_k} \sixj{j_i}{m_1}{b_1}{c_1}{j_k}{j_j}H^{j_k j_i; b_1}_{c_2\,a;\;b_2}(j_i;m_1)
\makeSymbol{\includegraphics[scale=0.7]{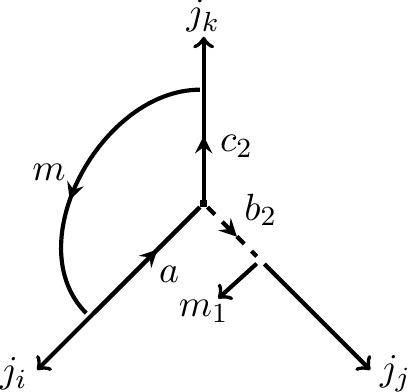}}
\end{split}
\end{align}
where 
\begin{gather*}
H^{i \,j; x}_{a\, b; y}(k;m_1)=\\
\epsilon(i,j,k)\frac{8i}{N^2_m\,\kappa\, l^2_p}\!\!\tsum_{\stackrel{z_1,m_2}{m_3\in 2\N+1}}\sum_{z_2} 
(-)^{m_1+z_1+x+i+a+2j}
\tinysixj{m}{x}{m_1}{z_1}{m_2}{k}
 V_{z_2}^{x}\left(i,j|z_1; m_2\right)
\tinyninej{m_2}{z_1}{z_2}{m_1}{k}{y}{m}{m}{m_3}
\tinyninej{i}{j}{z_2}{a}{b}{y}{m}{m}{m_3}~.
\end{gather*}
and $H^{i \,j; x}_{a\, b; y}=H^{j \,i; x}_{b\, a; y}$.\\
This concludes the discussion of the action of the Euclidean constraint on trivalent vertices, invariant as well as variant ones. The matrix elements for higher valent nodes can be in principle obtained by analogous methods (see e.g. \cite{ioantonia}  for four-valent vertices). 
\section{Matrix elements of $\hat{T}$}
\label{sec:main2}
Having all tools at our hand we can now proceed to calculate the matrix elements of
\[
\hat{T}\propto\epsilon_{ijk}\Tr_m[h_{s_i}[h_{s_i^{-1}},\hat{K}] h_{s_j} [h_{s_j^{-1}},\hat{K}] h_{s_k}[h_{s_k^{-1}},\hat{V}]]~.
\]
Because the volume and therefore also the extrinsic curvature $\hat{K}=\frac{i}{l^2_p\gamma} [\hat{V},\hat{H}]$ of a gauge-invariant trivalent node is zero, the only non-vanishing contribution of $\hat{T}$ on such nodes is proportional to\footnote{This is of course not true in general.} 
\[
\epsilon_{ijk}\Tr_m[h_{s_i}\hat{K}h_{s_i^{-1}} h_{s_j} [h_{s_j^{-1}},\hat{K}] h_{s_k}\hat{V}h_{s_k^{-1}}]~.
\]
The only unknown part in this expression is the extrinsic curvature that will be discussed next before evaluating the full trace in \ref{ssec:trace1} and \ref{ssec:trace2}. 

\subsection{Extrinsic curvature of gauge-variant trivalent nodes}

As the extrinsic curvature is linear in $\hat{H}$ its action on a trivalent (variant) vertex decomposes into a sum,
\beq
\hat{K}=\frac{i}{l^2_p\gamma}\left( \hat{K}(\alpha_{ij})+ \hat{K}(\alpha_{jk})+ \hat{K}(\alpha_{ki})\right)~, 
\eeq
of the three contributions, $\hat{K}(\alpha_{ij}):=\frac{4 i}{N_m^2\kappa l_p^2}[\hat{V},\Tr_m[(h_{\alpha_{ij}}-h_{\alpha_{ji}})h_{s_k}Vh_{s_k^{-1}}]$, associated to the loops $\alpha_{ij}$. By combining the results of section \ref{ssec:volume} and \ref{ssec:Hvarinat} one finds immediately
 \[
 K^{j_i j_j;c_1}_{\,a\;b;c_2}(j_k;m_1)=\sum_{x,y} d_y \,H^{j_i j_j;x}_{\,a\;b;\,y}(j_k; m_1)\,
 \left[\frac{1}{d_{c_2}}V^{y}_{c_2}(a,b,j_k; m_1)\, \delta_{x,c_1}
 -V^{c_1}_x(j_i,j_j|j_k;m_1) \,\delta_{y,c_2}\right]~.
 \]
for
\beq
\label{eqn:K_alpha}
 \hat{K}(\alpha_{ij})\makeSymbol{\includegraphics[scale=0.8]{Graphics/lorentz2-figure20}}
: =\tsum_{a,b,c_2} K^{j_i j_j;c_1}_{\,a\;b;\,c_2}(j_k;m_1)
\makeSymbol{\includegraphics[scale=0.8]{Graphics/lorentz2-figure28}}~.
 \eeq
and the full extrinsic curvature is given by
 \begin{gather}
 \label{eqn:Kmatrix}
\hat{K}\!\!\!
\makeSymbol{\includegraphics[scale=0.7]{Graphics/lorentz2-figure20}}=
\tsum_{a_1}(-)^{2m_1}(-)^{j_i+j_j+c_1} \tinysixj{j_i}{j_k}{j_j}{m_1}{c_1}{a_1}
\tsum_{a_2,b,c_2}  K^{j_j j_k;a_1}_{\,b c_2;\;a_2}(j_i;m_1)\!\!
\makeSymbol{\includegraphics[scale=0.7]{Graphics/lorentz2-figure29}}+\\\nonumber
\tsum_{b_1}(-)^{j_i+b_1+j_k+2m_1} \tinysixj{j_i}{m_1}{j_j}{j_k}{c_1}{b_1}
\tsum_{a,b_2,c_2}  K^{j_k j_i;b_1}_{c_2\,a;b_2}(j_j;m_1)
\makeSymbol{\includegraphics[scale=0.7]{Graphics/lorentz2-figure30}}\!\!\!\!\!\!
+\tsum_{a,b,c_2}  K^{j_i j_j;c_1}_{\,a\;b;\;c_2}(j_k;m_1) \!\!\!\!\!\!\!
\makeSymbol{\includegraphics[scale=0.7]{Graphics/lorentz2-figure28}}\!\!\!\!\!.
\end{gather}
Note, due to the symmetries of volume and Euclidean constraint $K^{j_i j_j;c_1}_{\,a\;b;\,c_2}=(-)^{c_1-c_2}K^{j_j j_i;c_1}_{\,b\;a;\,c_2}$.

\subsection{Matrix elements of $\Tr_m[h_{s_i} \hat{K} h_{s_i^{-1}} h_{s_j} \hat{K} h_{s_j^{-1}} h_{s_k}V h_{s_k^{-1}}]$}
\label{ssec:trace1}

The missing link to write down the complete action of $\hat{T}$ on trivalent invariant nodes are the contributions $\Tr_m[h_{s_i} \hat{K} h_{s_i^{-1}} h_{s_j} \hat{K} h_{s_j^{-1}} h_{s_k}V h_{s_k^{-1}}]$ and $\Tr_m[h_{s_i} \hat{K} h_{s_i^{-1}} \hat{K} h_{s_k}V h_{s_k^{-1}}]$. The latter contribution is analyzed in the succeeding section while here the action of the first trace is evaluated. \\
The term $h_{s_k}\hat{V}h_{s_k^{-1}}$ is just the same as for the Euclidean Hamiltonian. Therefore,
\begin{gather}
\label{eqn:A1}
\makeSymbol{\includegraphics[scale=0.7]{Graphics/lorentz2-figure13}} \xrightarrow{h_{s_j^{-1}}h_{s_k} \hat{V}h_{s_k^{-1}}}
\tsum_{c_1,b_1} \sum_{c_2} (-)^{2(j_k+j_j)}\, V^{j_k}_{c_2}(j_i,j_j|c_1;m)
\makeSymbol{\includegraphics[scale=0.7]{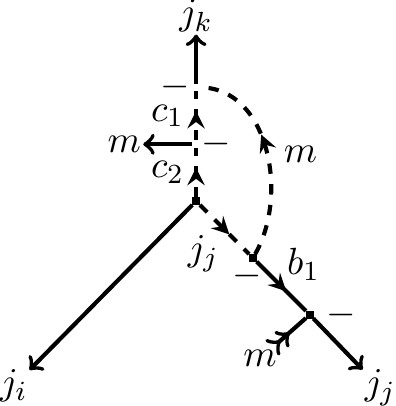}}~.
\end{gather}
Before acting with $\hat{K}$ the dashed leg $m$ must be erased from the diagram and $\tip\,m$ moved to the appropriate place. This can be done simultaneously via  \eqref{eqn:move1}, \eqref{eqn:move2} and \eqref{eqn:move3} so that $\hat{K}$ acts on the above expression by
\begin{gather*}
 \tsum_{c_1,b_1} \sum_{c_2} V^{j_k}_{c_2}(j_i,j_j|c_1; m)
 \Big[\tsum_{\stackrel{a_1,b_2}{c_3,c_4}} (-)^{j_i-j_j-j_k}
 \sixj{c_1}{c_3}{c_2}{j_k}{m}{m} \sixj{j_i}{m}{j_j}{c_3}{c_2}{b_1}
K^{j_i b_1;c_3}_{a_1b_2;c_4}(j_k;m) 
\makeSymbol{\includegraphics[scale=0.7]{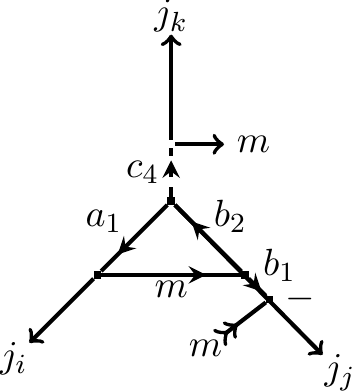}}\\[3pt]
+\tsum_{\stackrel{a_1,a_2}{b_2,c_3}} (-)^{j_i+a_1+c_1+c_2} 
\sixj{c_1}{j_i}{c_2}{a_1}{m}{j_j} \sixj{a_1}{m}{j_j}{j_k}{c_1}{b_1}
K^{b_1 j_k;a_1}_{b_2 c_3;a_2}(j_i;m) 
\makeSymbol{\includegraphics[scale=0.7]{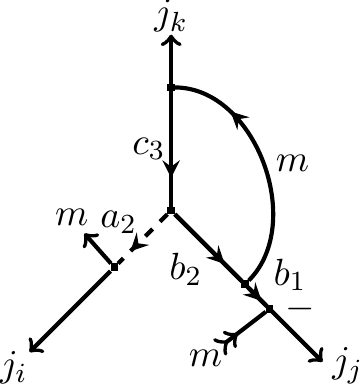}}\\[3pt]
+\tsum_{\stackrel{a_1,c_3}{b_2,b_3}} (-)^{j_j+b_2} 
\ninej{j_i}{b_2}{j_k}{j_j}{b_1}{m}{c_2}{m}{c_1}
K^{j_k j_i;b_2}_{c_3 a_1;b_3}(j_j;m)
\makeSymbol{\includegraphics[scale=0.7]{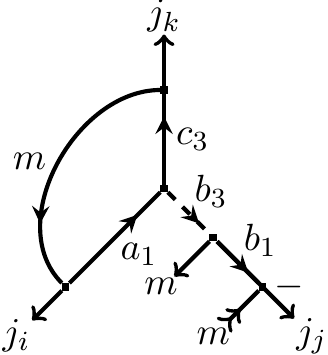}}
\Big]
\end{gather*}
In the next step $h_{s_i^{-1}} h_{s_j}$ is added. For the third term this is straightforward: $h_{s_j}$ transforms $b_1$ into a dashed line and $h_{s_i^{-1}}$ can be coupled as usual via \eqref{eqn:basicOpposed} resulting in
\beq
\label{eqn:third}
\tsum\cdots \tsum_{a_2}
\makeSymbol{\includegraphics[scale=0.7]{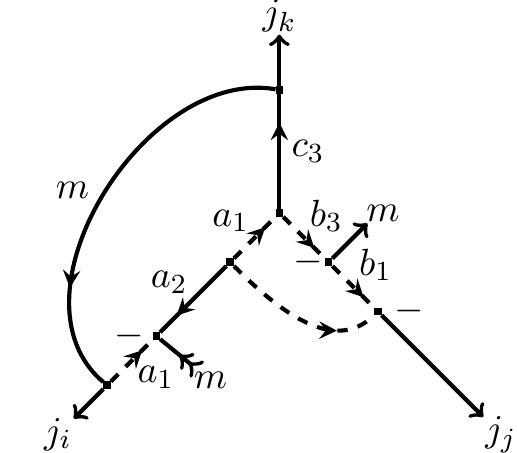}}~.
\eeq
The most efficient way to proceed with the other two terms is to couple $h_{s_j}$ via \eqref{eqn:basicrule} and then use \eqref{eqn:6jmove}. This yields
\beq
\label{eqn:first}
\tsum\cdots \tsum_{a_2,b_3}(-)^{2(a_1+b_2)}\!\!\!\!\!\!\!
\makeSymbol{\includegraphics[scale=0.7]{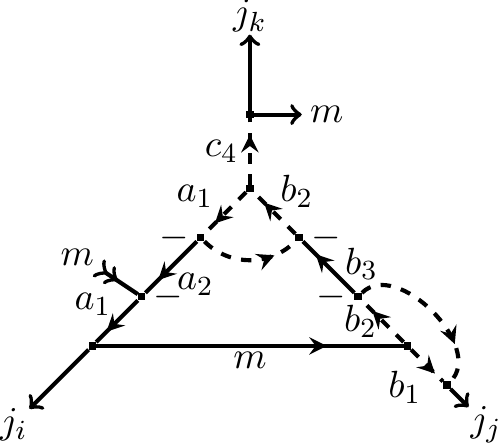}}\!\!\!\!\!\!\!
=
\tsum\cdots \tsum_{a_2,b_3}(-)^{j_j+c_4-a_1}
\tinysixj{b_1}{b_3}{b_2}{j_j}{m}{m}  \tinysixj{a_1}{b_3}{b_2}{a_2}{c_4}{m}\!\!
\makeSymbol{\includegraphics[scale=0.8]{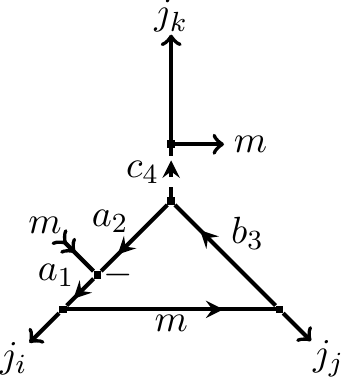}} 
\eeq
for the first term and 
\beq
\label{eqn:second}
\tsum\cdots \tsum_{a_3,b_3}(-)^{2(j_i+b_3)} \!\!\!\!\!
\makeSymbol{\includegraphics[scale=0.65]{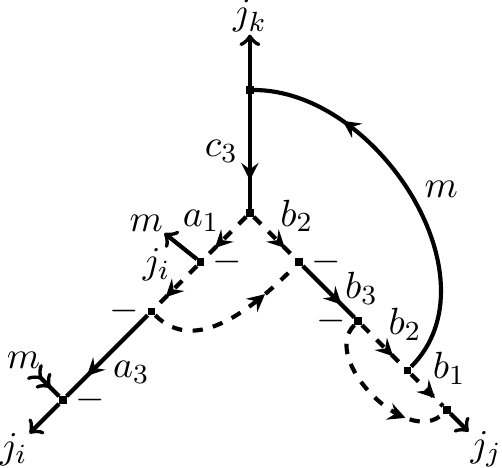}}
=
\tsum\cdots \tsum_{a_3,b_3}(-)^{2(j_i+b_3)}(-)^{b_1+b_3} 
\tinysixj{b_1}{b_3}{b_2}{j_j}{m}{m}\!\!\!\!
 \makeSymbol{\includegraphics[scale=0.8]{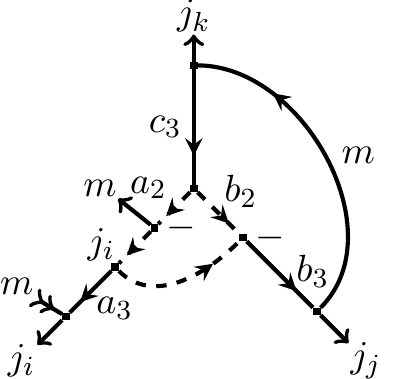}}
\eeq
for the second term. Since the newly created nodes are coplanar, $\hat{V}$ as well as $\hat{K}$ are vanishing on them no matter if they are gauge invariant or not. Therefore it suffices to calculate $\Tr[h_{s_i}\hat{K}\cdots$ on the inner parts. E.g. instead of considering the full diagram \eqref{eqn:first} it suffices to evaluate $\Tr[h_{s_i}\hat{K}\cdots$ on
\[
\makeSymbol{\includegraphics[scale=0.85]{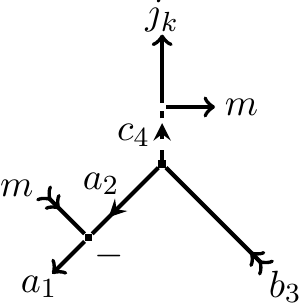}}~.
\]
Inserting \eqref{eqn:Kmatrix} and contracting with the last holonomy $h_{s_i}$ results in 
\begin{gather*}
\tsum_{\stackrel{a_3,a_4}{b_4,c_5}} (-)^{2a_4}K^{a_2 b_3;c_4}_{a_3 b_4;c_5}(j_k;m) \!\!\!\!
\makeSymbol{\includegraphics[scale=0.6]{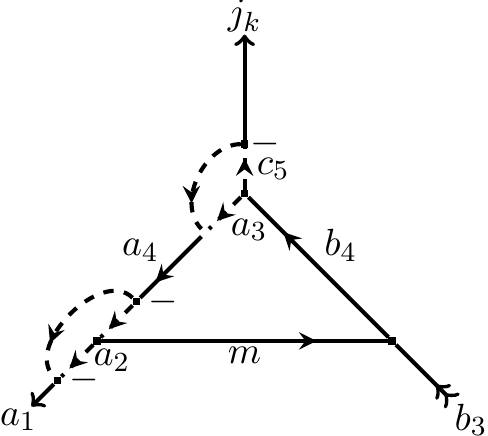}}\!\!
+\tsum_{\stackrel{a_3,a_4}{b_4,c_5}} (-)^{a_2+b_3+c_4+2m} \sixj{a_2}{j_k}{b_3}{m}{c_4}{a_3}
K^{b_3 c_4;a_3}_{b_4 c_5;a_4}(a_2;m) \!\!\!\!
\makeSymbol{\includegraphics[scale=0.65]{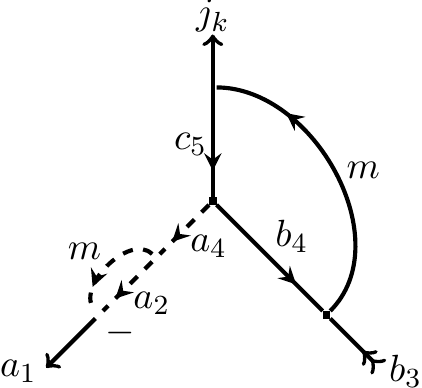}}\\
+\tsum_{\stackrel{a_3,a_4}{b_4,b_5,c_5}} (-)^{2(m+a_3)}(-)^{a_2+b_4+j_k} \sixj{a_2}{m}{b_3}{j_k}{c_4}{b_4}
K^{j_k a_2;b_4}_{c_5 a_3;b_5}(b_3;m) 
\makeSymbol{\includegraphics[scale=0.6]{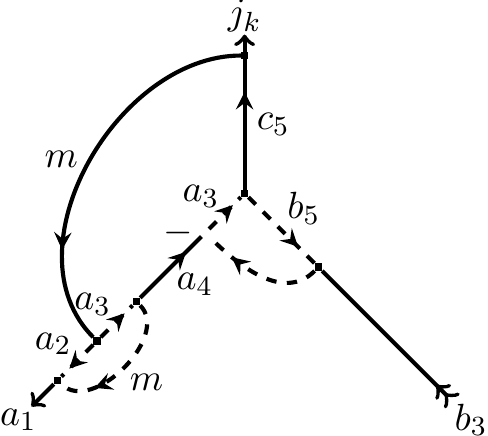}}~.
\end{gather*}
for this node. Finally, all dashed parts of the graphics can be erased due to \eqref{eqn:6jmove} and \eqref{eqn:firstorthogonality}. The other summands, \eqref{eqn:third} and \eqref{eqn:second} can be treated along the same lines. Just that in this case the inner parts are of the same type as the node shown on the right hand side of \eqref{eqn:A1}. Consequently, they have to be exerted again to remove the dashed line $m$ and move $\tip\,m$ to the right place before acting with $\hat{K}$. The final result of this computation is:
\begin{gather*}
\Tr_m[h_{s_i} \hat{K} h_{s_i^{-1}} h_{s_j} \hat{K} h_{s_j^{-1}} h_{s_k}V h_{s_k^{-1}}]
\makeSymbol{\includegraphics[scale=0.65]{Graphics/lorentz2-figure13}}
=
\tsum_{b_1,c_1}\sum_{c_2} V^{j_k}_{c_2}(j_i,j_j|c_1;m){\Big[}\\
\tsum_{\stackrel{a_1,a_2,b_2}{b_3,c_3,c_4}} (-)^{j_i-a_1+c_4-j_k}
\tinysixj{c_1}{c_3}{c_2}{j_k}{m}{m} \tinysixj{j_i}{m}{j_j}{c_3}{c_2}{b_1} K^{j_i b_1; c_3}_{a_1 b_2;c_4}(j_k;m)
\tinysixj{b_1}{b_3}{b_2}{j_j}{m}{m}\tinysixj{a_1}{b_3}{b_2}{a_2}{c_4}{m}
\Big[\\
\tsum_{\stackrel{a_3,a_4}{b_4,c_5}}  K^{a_2 b_3; c_4}_{a_3 b_4;c_5}(j_k;m)
 (-)^{a_2+a_3-a_4+b_4+c_5}
\tinysixj{a_2}{a_4}{a_3}{a_1}{m}{m}\tinysixj{a_3}{j_k}{b_4}{m}{c_5}{a_4}
\makeSymbol{\includegraphics[scale=0.65]{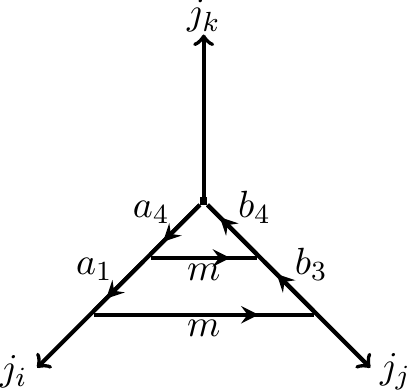}}
\\+
\tsum_{\stackrel{a_3,b_4}{c_5}}(-)^{b_3+c_4-a_2}\tinysixj{a_1}{j_k}{a_3}{c_4}{m}{b_3}
K^{b_3 j_k; a_3}_{b_4 c_5; a_1}(a_2;m) 
\makeSymbol{\includegraphics[scale=0.65]{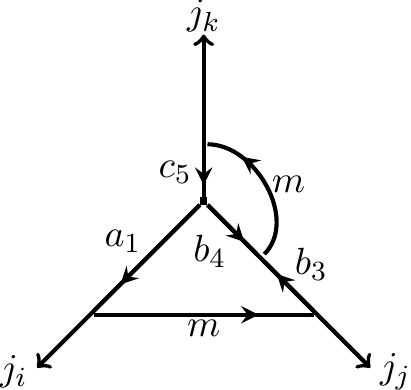}}
\\+
\tsum_{\stackrel{a_3,a_4}{b_4,b_5,c_5}} (-)^{a_1+a_2-a_3+a_4} (-)^{b_3+b_4+j_k+c_5}
\tinysixj{b_3}{j_k}{b_4}{c_4}{m}{a_2} K^{j_k a_2; b_4}_{c_5 a_3; b_5}(b_3;m) 
\tinysixj{a_2}{a_4}{a_3}{a_1}{m}{m}\tinysixj{a_3}{b_3}{b_5}{a_4}{c_5}{m}
\makeSymbol{\includegraphics[scale=0.65]{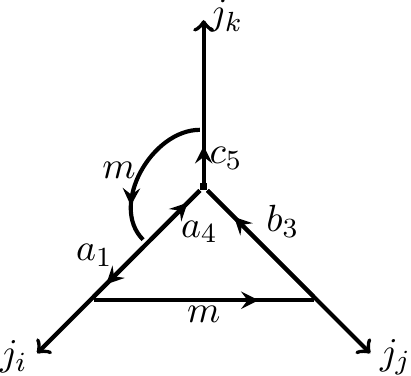}}
\Big]
\end{gather*}
\begin{gather*}
+
\tsum_{\stackrel{a_1,a_2}{b_2,b_3,c_3}} (-)^{a_1-j_i+b_1-b_3+c_1+c_2}
\tinysixj{c_1}{j_i}{c_2}{a_1}{m}{j_j} \tinysixj{a_1}{m}{j_j}{j_k}{c_1}{b_1} 
K^{b_1 j_k;a_1}_{b_2 c_3; a_2}(j_i;m) \tinysixj{b_1}{b_3}{b_2}{j_j}{m}{m}
\Big[\\
\tsum_{\stackrel{a_3,a_4}{b_4,c_4,c_5}}(-)^{a_3-a_4+a_5+b_4+c_5} 
\tinysixj{a_2}{c_4}{b_2}{m}{c_3}{j_i}\tinysixj{j_i}{b_3}{b_2}{a_2}{c_4}{m}
K^{a_3 b_3; c_3}_{a_4 b_4;c_4}(c_3; m) 
\tinysixj{a_3}{a_5}{a_4}{j_i}{m}{m} \tinysixj{a_4}{c_3}{b_4}{m}{c_5}{a_5}
\makeSymbol{\includegraphics[scale=0.65]{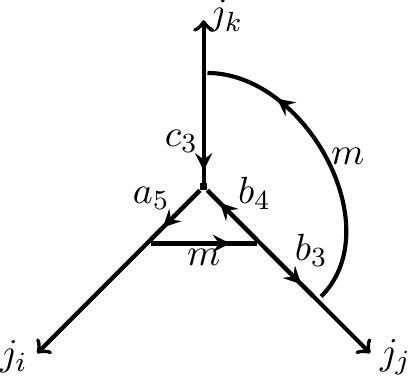}}
\\+
\tsum_{\stackrel{a_3,a_4}{b_4, c_4}} (-)^{j_i-a_2+a_4+b_2+c_3} 
\tinysixj{j_i}{a_4}{a_2}{a_3}{m}{m} \tinysixj{j_i}{b_3}{b_2}{a_4}{c_4}{m}
K^{b_3 c_3 ; a_4}_{b_4 c_4;a_1}(a_3;m)
\makeSymbol{\includegraphics[scale=0.65]{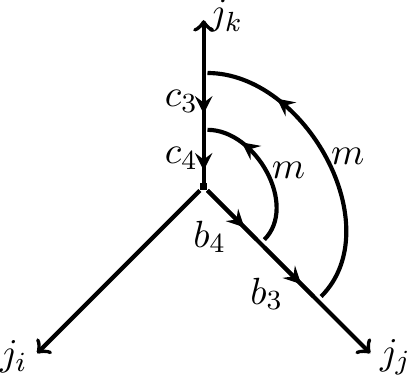}}
\\+
\tsum_{\stackrel{a_3,a_4}{b_4, b_5,c_4}}(-)^{a_2+a_4-a_5+b_2+c_4}
\tinyninej{b_2}{m}{b_3}{c_3}{a_3}{b_4}{a_2}{j_i}{m}
K^{c_3 a_3; b_4}_{c_4 a_4;b_5}(b_3;m)
\tinysixj{a_3}{a_5}{a_4}{j_i}{m}{m} \tinysixj{a_4}{b_3}{b_5}{a_5}{c_4}{m}
\makeSymbol{\includegraphics[scale=0.65]{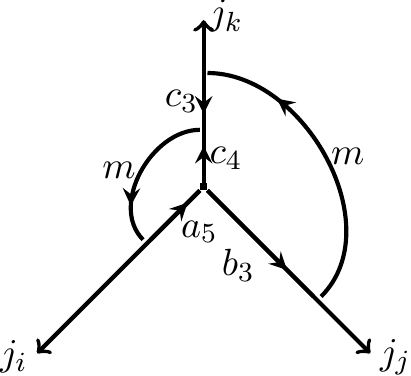}}
\Big]
\end{gather*}
\begin{gather*}
+\tsum_{\stackrel{a_1,a_2}{b_2,b_3,c_3}} (-)^{b_1-b_2+m}\tinyninej{j_i}{b_2}{j_k}{j_j}{b_1}{m}{c_2}{m}{c1}
K^{j_k j_i; b_2}_{c_3 a_1; b_3}(b_1;m)
\Big[\\
\tsum_{\stackrel{a_3,a_4}{b_4,c_4,c_5}} (-)^{a_2-a_3+a_4+j_j+b_3+b_4+c_3+c_4+c_5+m}
\tinysixj{a_1}{m}{b_3}{c_4}{c_3}{b_1} \tinysixj{a_1}{j_j}{b_1}{a_2}{c_4}{m}
K^{a_2 j_j; c_4}_{a_3 b_4; c_5}(c_3;m)
\tinysixj{a_2}{a_4}{a_3}{a_1}{m}{m} \tinysixj{a_3}{c_3}{b_4}{m}{c_4}{a_4}
\makeSymbol{\includegraphics[scale=0.65]{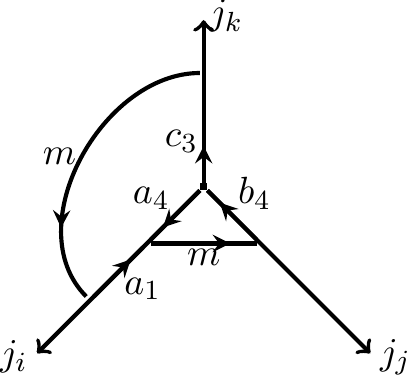}}
\\+
\tsum_{a_3,c_4} (-)^{a_2+b_3-c_3+m}\tinyninej{j_j}{b_1}{m}{c_3}{b_3}{a_1}{a_3}{m}{a_2}
K^{j_j c_3; a_3}_{b_4 c_4; a_1}(a_2;m) 
\makeSymbol{\includegraphics[scale=0.65]{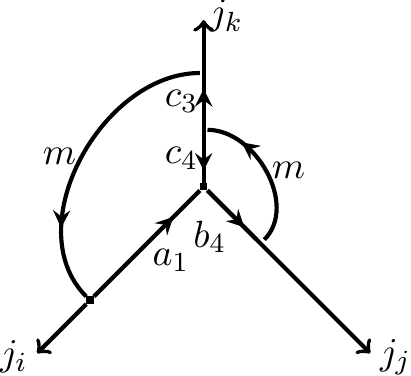}}
\\+
\tsum_{\stackrel{a_3,a_4}{b_4, b_5,c_4}}  (-)^{a_4-a_3+j_j-b_1+c_3+c_4+m} 
\tinysixj{b_1}{b_4}{b_3}{j_j}{m}{m}\tinysixj{a_1}{b_4}{b_3}{a_2}{c_3}{m}
K^{c_3 a_2; b_4}_{c_4 a_3; b_5}(j_j;m) 
\tinysixj{a_2}{a_4}{a_3}{a_1}{m}{m} \tinysixj{a_3}{j_j}{b_5}{a_4}{c_4}{m}
\makeSymbol{\includegraphics[scale=0.65]{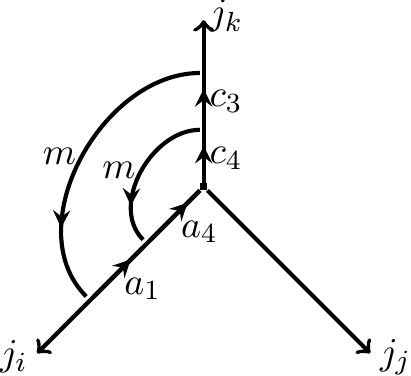}}
\Big]
\Big]
\end{gather*}

\subsection{Computation of $\Tr_m[h_{s_i} \hat{K} h_{s_i^{-1}} \hat{K} h_{s_k} V h_{s_k^{-1}}]$}
\label{ssec:trace2}
As before $ h_{s_k} \hat{V} h_{s_k^{-1}}$ produces a double non-invariant node transforming in the representation $\mathcal{H}^{\ast}_m\otimes\mathcal{H}_m$ where $^{\ast}$ denotes the adjoint. In contrast to the above the extrinsic curvature $\hat{K}$ is directly acting on this node. Therefore it is advisable to introduce an artificial coupling as it was done for the volume: 
\begin{gather*}
\makeSymbol{\includegraphics[scale=0.7]{Graphics/lorentz2-figure13}}
\xrightarrow{h_{s_k} \hat{V} h_{s_k^{-1}}}
\sum_{c_1, c_2} d_{c_1} (-)^{2 j_k}V^{j_k}_{c_2}(j_i,j_j|c_1; m)
\makeSymbol{\includegraphics[scale=0.7]{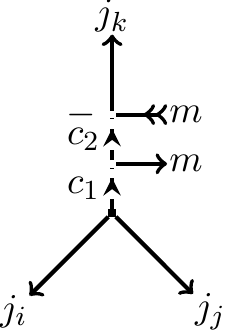}}\\
= \tsum_{c_1, m_1}\sum_{c_2} V^{j_k}_{c_2}(j_i,j_j|c_1;m)(-)^{c_1+c_2+m}
\sixj{m}{j_k}{m}{c_2}{m_1}{c_1} 
\makeSymbol{\includegraphics[scale=0.7]{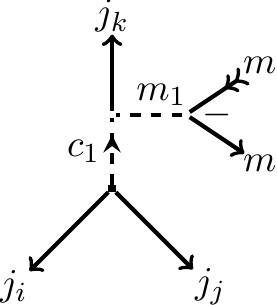}}
\end{gather*}
Recall that $m$ and $m_1$ are purely internal so that the curvature operator only registers a trivalent node transforming in spin $m_1$ and the previous results \eqref{eqn:Kmatrix} can be employed. Thus $h_{s_i^{-1}}\hat{K}$ is transforming the above expression into 
\begin{gather*}
\tsum_{c_1, m_1}\sum_{c_2} V^{j_k}_{c_2}(j_i,j_j|c_1;m)(-)^{c_1+c_2+m}
\sixj{m}{j_k}{m}{c_2}{m_1}{c_1}\left[ \tsum_{\stackrel{a_1,a_2}{b_1, c_3}} K^{j_i j_j; c_2}_{a_1b_1; c_3}(j_k;m_1) (-)^{2a_1} 
\makeSymbol{\includegraphics[scale=0.7]{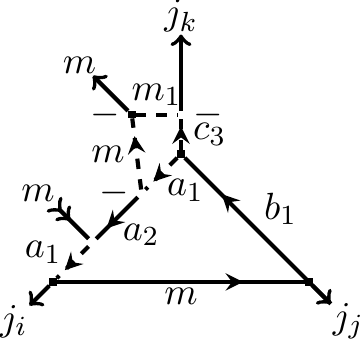}}
+\cdots\right] ~.
\end{gather*}
The link $m_1$ can be decoupled and parts of the internal lines can be removed: 
 \begin{gather*}
\makeSymbol{\includegraphics[scale=0.8]{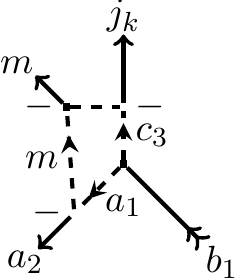}}
=\tsum_{c_3} (-)^{2c_4}
\makeSymbol{\includegraphics[scale=0.8]{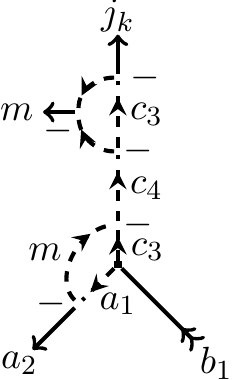}}\\
 =\tsum_{c_4} (-)^{2c_4} (-)^{b_1+j_k-a_1} \sixj{a_1}{c_4}{b_1}{m}{c_3}{a_2} 
 \sixj{m}{j_k}{m}{c_3}{m_1}{c_4}
\makeSymbol{\includegraphics[scale=0.8]{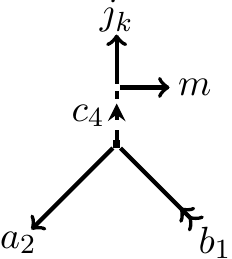}}
 \end{gather*}
The rest of the calculation is completely equivalent to the one in the previous section. With all other terms one can proceed similarly resulting in:
\begin{gather*}
\tsum_{c_1,m_1}\sum_{c_2} V^{j_k}_{c_2}(j_i,j_j|c_1; m) (-)^{c_1+c_2+m}
\tinysixj{m}{j_k}{m}{c_2}{m_1}{c_1}
\Big[
\tsum_{\stackrel{a_1,a_2}{b_1,c_3}} K^{j_i j_j; c_2}_{a_1 b_1; c_3}(j_k;m_1)
\Big[\\
\tsum_{\stackrel{a_3,a_4}{b_2,c_4,c_5}} (-)^{a_3+a_4+m} (-)^{b_2-b_1+j_k+c_5+m_1}
\tinysixj{a_1}{c_4}{b_1}{m}{c_3}{a_2} \tinysixj{j_k}{m}{c_3}{m}{m_1}{c_4} 
K^{a_2 b_1; c_4}_{a_3 b_2; c_5}(j_k;m)
\tinysixj{a_2}{a_4}{a_3}{a_1}{m}{m} \tinysixj{a_3}{j_k}{b_2}{m}{c_5}{a_4} 
\makeSymbol{\includegraphics[scale=0.65]{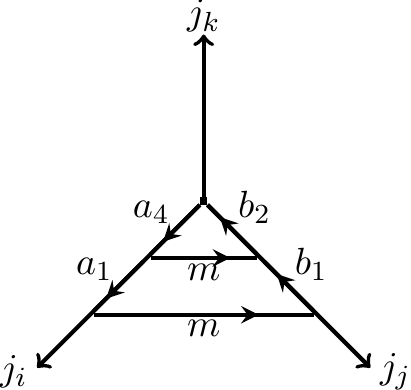}}
\\+
\tsum_{\stackrel{a_3,b_2}{c_4}} (-)^{a_1+a_2+a_3+b_1+c_3+m} 
\tinysixj{a_1}{m}{m_1}{a_2}{a_3}{m}\tinysixj{a_1}{j_k}{m_1}{b_1}{a_3}{c_3}
K^{b_1 j_k; a_3}_{b_2 c_4; a_1}(a_2;m)
\makeSymbol{\includegraphics[scale=0.65]{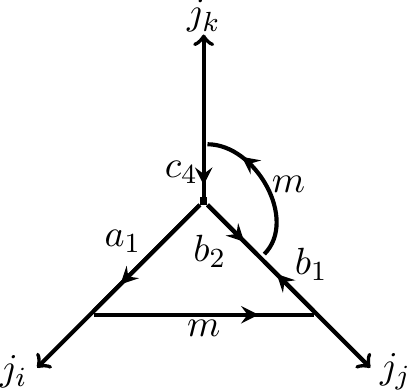}}
\\+
\tsum_{\stackrel{a_3,a_4}{b_2,b_3,c_4}}(-)^{a_2+a_3+a_4+b_1-j_k+c_3+c_4+m}
\tinyninej{a_1}{b_1}{c_3}{a_2}{b_2}{j_k}{m}{m}{m_1}
K^{j_k a_2; b_2}_{c_4 a_3; b_3}(b_1; m)
\tinysixj{a_2}{a_4}{a_3}{a_1}{m}{m}\tinysixj{a_3}{j_k}{b_2}{m}{c_5}{a_4}
\makeSymbol{\includegraphics[scale=0.65]{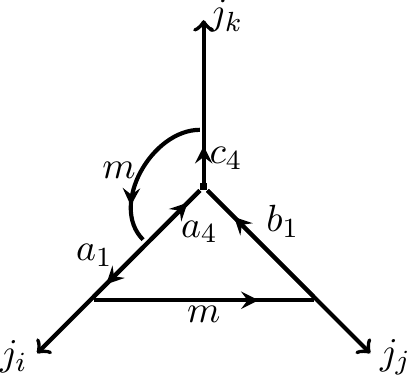}}
\Big]
\end{gather*}
\begin{gather*}
\tsum_{\stackrel{a_1,a_2,a_3}{b_1,c_3}} (-)^{a_2+a_3-j_i+j_j+c_2+m} 
\tinysixj{j_i}{j_k}{j_j}{m_1}{c_2}{a_1} 
K^{j_j j_k; a_1}_{b_1 c_3; a_2}(j_i;m_1)
\tinysixj{j_i}{m}{a_2}{m}{m_1}{a_3}\\
\Big[
\tsum_{\stackrel{a_4,a_5}{b_2, c_4,c_5}} (-)^{2m}(-)^{a_4+a_5+b_1+b_2+c_4+c_5} 
\tinysixj{a_2}{c_4}{b_1}{m}{c_3}{a_3} 
K^{a_3 b_1; c_4}_{a_4 b_2; c_5}(c_3;m)
\tinysixj{a_3}{a_5}{a_4}{j_i}{m}{m}\tinysixj{a_4}{c_3}{b_2}{m}{c_5}{a_5}
\makeSymbol{\includegraphics[scale=0.65]{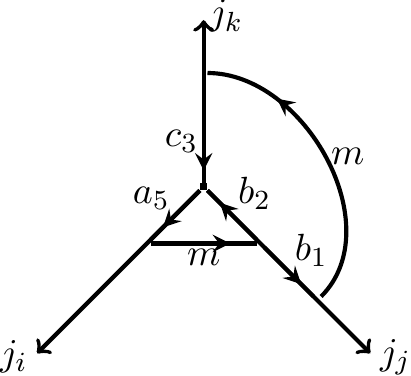}}
\\+
\tsum_{b_4,c_2} (-)^{2j_i}K^{b_1 c_3; a_2}_{b_2 c_4; j_i} (a_3; m)
\makeSymbol{\includegraphics[scale=0.65]{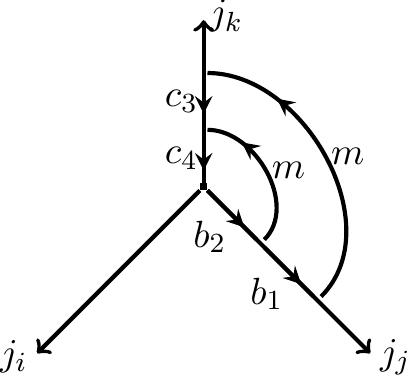}}
\\+
\tsum_{\stackrel{a_4,a_5}{b_2,b_3,c_4}} (-)^{j_i+a_2+a_4+a_5-c_3+c_4}
\tinysixj{a_2}{b_2}{b_1}{a_3}{c_3}{m}
K^{c_3 a_3: b_2}_{c_4 a_4;b_3}(a_2;m)
\tinysixj{a_3}{a_5}{a_4}{j_i}{m}{m} \tinysixj{a_4}{b_1}{b_3}{a_5}{c_4}{m}
\makeSymbol{\includegraphics[scale=0.65]{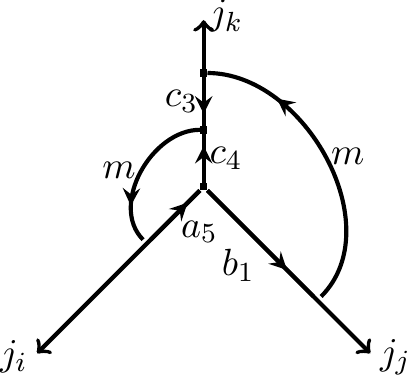}}
\Big]
\end{gather*}
\begin{gather*}
+ \tsum_{\stackrel{a_1,a_2}{b_1,b_2,c_3}} (-)^{j_i+j_k+b_1} 
\tinysixj{j_i}{m_1}{j_j}{j_k}{c_2}{b_1}
K^{j_k j_i;b_1}_{c_3 a_1;b_2}(j_j;m_1)\\
\Big[
\tsum_{\stackrel{a_3,a_4}{b_3,c_4,c_5}} (-)^{2 j_j} (-)^{m-a_2+a_3+a_4+b_3+c_3+c_4+c_5}
\tinyninej{a_1}{c_3}{b_2}{a_2}{c_4}{j_j}{m}{m}{m_1}
 K^{a_2 j_j;c_4}_{a_3 b_3; c_5}(c_3; m)
\tinysixj{a_2}{a_4}{a_3}{a_1}{m}{m}\tinysixj{a_3}{c_4}{b_3}{m}{c_5}{a_4}
\makeSymbol{\includegraphics[scale=0.65]{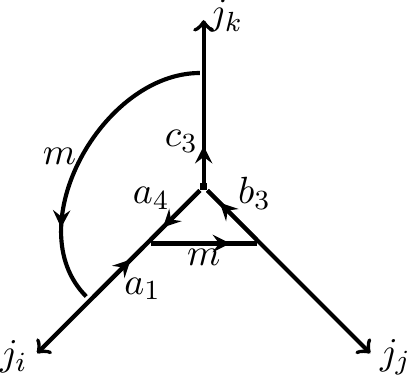}}
\\+
\tsum_{\stackrel{a_3,b_3}{c_4}} (-)^{j_j+c_3-a_2-m}
 \tinysixj{a_1}{m}{m_1}{a_2}{a_3}{m} \tinysixj{a_1}{j_j}{m_1}{c_3}{a_3}{b_2}
K^{j_j c_3; a_3}_{b_3 c_4; j_i}(a_2;m)
\makeSymbol{\includegraphics[scale=0.65]{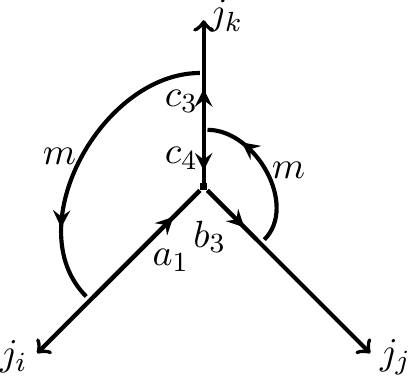}}
\\+
\tsum_{\stackrel{a_3,a_4}{b_3,b_4,c_4}} (-)^{a_3-a_4+b_2-b_3-c_3+c_4+m_1+m}
\tinysixj{a_1}{b_3}{b_2}{a_2}{c_3}{m}  \tinysixj{b_2}{m}{j_j}{m}{m_1}{b_3}
K^{c_3 a_2;b_3}_{c_4 a_3;b_4}(j_j;m)
\tinysixj{a_2}{a_4}{a_3}{a_1}{m}{m} \tinysixj{a_3}{j_j}{b_4}{a_4}{c_4}{m}
\makeSymbol{\includegraphics[scale=0.65]{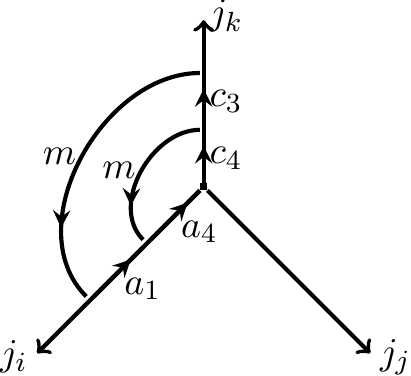}}
\Big]
\Big]
\end{gather*}

\subsection{Contraction with $\epsilon$}

To obtain the full action of $\hat{T}$ on a trivalent (invariant) node both trace contributions as computed in the previous sections must be summed up and contracted with the $\epsilon$-tensor. For the Euclidean constraint this antisymmetric contraction could be nicely absorbed in the loop trick \eqref{eqn:looptrick} and lead to major simplifications. Unfortunately, this does not happen for the remaining part of the scalar constraint. Since both, volume and extrinsic curvature, depend on whether one couples first holonomy $h_k$ or $h_i$ this contraction is not simplifying but complicating matters.\\
Yet, since we used an abstract calculus to evaluate the trace parts we are free to switch edges and nodes in the most advantageous position as long as the changes in (abstract) orientation and ordering are respected. For example:
\begin{gather*}
\Tr[h_{s_j}\hat{K}h_{s_j^{-1}}h_{s_i}\hat{K}h_{s_i^{-1}}h_{s_k}\hat{V}h_{s_k^{-1}}]\makeSymbol{\includegraphics[scale=0.7]{Graphics/lorentz2-figure13}}\!\!\!\!\!
=(-)^{j_i+j_j+j_k}\,
\Tr[h_{s_j}\hat{K}h_{s_j^{-1}}h_{s_i}\hat{K}h_{s_i^{-1}}h_{s_k}\hat{V}h_{s_k^{-1}}]\makeSymbol{\includegraphics[scale=0.7]{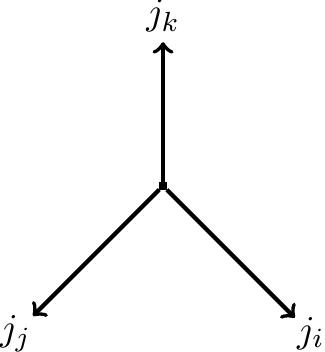}}
\end{gather*}
The trace can now be evaluated as above treating $i$ as $j$ and $j$ as $i$. Finally the edges should be flipped back:
\begin{gather*}
\Tr[h_{s_j}\hat{K}h_{s_j^{-1}}h_{s_i}\hat{K}h_{s_i^{-1}}h_{s_k}\hat{V}h_{s_k^{-1}}]\makeSymbol{\includegraphics[scale=0.7]{Graphics/lorentz2-figure13}}\!\!\!\!\!
=(-)^{j_i+j_j+j_k}
\tsum_{a_1,c_1}\sum_{c_2}
(-)^{j_k-c_2} V^{j_k}_{c_2}(j_i,j_j|c_1;m){\Big[}\\
\tsum_{\stackrel{b_1,b_2,a_2}{a_3,c_3,c_4}} (-)^{j_j-b_1+c_4-j_k}
\tinysixj{c_1}{c_3}{c_2}{j_k}{m}{m} \tinysixj{j_j}{m}{j_i}{c_3}{c_2}{a_1}(-)^{c_3-c_4} K^{a_1 j_j; c_3}_{a_2 b_1;c_4}(j_k;m)
\tinysixj{a_1}{a_3}{a_2}{j_i}{m}{m}\tinysixj{b_1}{a_3}{a_2}{b_2}{c_4}{m}
\Big[\\
\tsum_{\stackrel{b_3,b_4}{a_4,c_5}} (-)^{c_4-c_5} K^{a_3 b_2; c_4}_{a_4 b_3;c_5}(j_k;m)
 (-)^{b_2+b_3-b_4+a_4+c_5}
\tinysixj{b_2}{b_4}{b_3}{b_1}{m}{m}\tinysixj{b_3}{j_k}{a_4}{m}{c_5}{b_4}
(-)^{j_k+j_j+j_i}
\makeSymbol{\includegraphics[scale=0.65]{Graphics/lorentz3-figure12}}+\cdots\Big]+\cdots\Big]
\end{gather*}
where we used $V^{j_k}_{c_2}(j_i,j_j|c_1;m)=(-)^{j_k-c_2}V^{j_k}_{c_2}(j_j,j_i|c_1;m)$, $K^{a_1 j_j; c_3}_{a_2 b_1;c_4}=(-)^{c_3-c_4}$ and
\[
\makeSymbol{\includegraphics[scale=0.65]{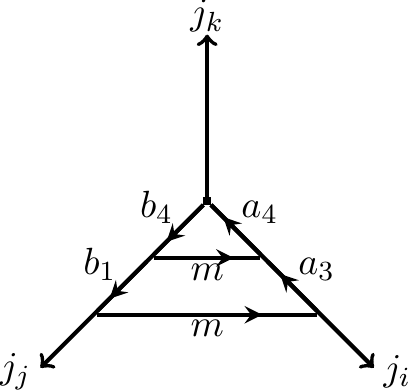}}
=
\underbrace{(-)^{j_k+a_4+b_4} (-)^{b_4+b_1+m} (-)^{j_j+b_1+m} (-)^{a_4+a_3+m} (-)^{j_i+a_3+m}}_{(-)^{j_k+j_j+j_i}}  \makeSymbol{\includegraphics[scale=0.65]{Graphics/lorentz3-figure12}}~.
\]
Note, that here the sign generated by the first switch of the edges is canceled by the one originating from restoring the old orientations. This is a generic property and applies to \emph{all} terms of the full expression. Only signs arising from volume and extrinsic curvature remain. The matrix elements corresponding to cyclic permutations of $(i,j,k)$ are simply obtained by exchanging the labels $(j_i,a_.)$ by $(j_j,b_.)$, $(j_j,b_.)$ by $(j_k,c_.)$ and so forth. \\
The second contribution $\Tr[h_{s_i}\hat{K}h_{s_i^{-1}}\hat{K}h_{s_k}\hat{V}h_{s_k^{-1}}]$ can be treated along the same lines. The value of $\Tr[h_{s_j}\hat{K}h_{s_j^{-1}}\hat{K}h_{s_k}\hat{V}h_{s_k^{-1}}]$ can be calculated by first flipping the edges so that $i$ can be treated as $j$ and vice versa and afterwards restoring the original orientation.

\section{Conclusion and Outlook}
\label{conclusion}

In this article we derived for the first time an explicit formula for the matrix elements of the full Hamiltonian constraint in LQG including the Lorentzian part. As already pointed out, this constraint plays a major role in any canonical quantization program for GR based on real Ashtekar-Barbero variables so that the methods developed in the course of the calculation are also of interest in these approaches, e.g. the master constraint approach. On the other hand, the tools developed to compute the action of the curvature (especially the loop trick \eqref{eqn:looptrick}) or extrinsic curvature can be easily adapted to models with non-graph changing operators, as the extended master constraint ansatz or AQG, by extending the loops involved in the regularization in such a way that no new links are created. \\
By exploiting several new recoupling identities, we significantly simplified the matrix element so that the recoupling part is totally captured in $6j$ and $9j$ symbols for which symmetry properties and explicit formulas are well known. Of course the final expression still depends on the volume but can be easily implemented on a computer for further investigations. We also expect to get interesting insight from a large $j$ expansion or the application in symmetry reduced models. Of special interest would be for example the recently introduced model \cite{iofrancesco1, iofrancesco2} which keeps the original $\SU(2)$ structure of the theory but has a diagonal volume operator so that it may be possible to give an analytical closed formula for the whole constraint within this Ansatz. Finally the presented analysis opens the way for a comparison with the covariant approach, because the spin foam vertex amplitudes are expected to be annihilated by the Hamiltonian constraint \cite{ioantonia}. As the matching between the canonical and covariant kinematics \cite{scattering3} led to the upgrade of the old Barret-Crane model \cite{BC} to the new EPRL-model \cite{EPRL}, the matching with the dynamical constraint is expected to shed new light onto the canonical-covariant joint theory.

{\acknowledgments
The authors wish to thank T.Thiemann for useful discussions and L.Cottrell for a careful reading of the manuscript.
The work of E.A. was partially supported by the grant of Polish Narodowe Centrum Nauki 
 nr DEC-2011/02/A/ST2/00300.}  
 A.Z. acknowledges financial support of the 'Elitenetzwerk Bayern' on the grounds of 'Bayerische Elitef\"order Gesetz'. 

\appendix
\section{More on $3j$'s, $6j$'s and $9j$'s}
\label{app:recoupling}
For self-containedness some important properties of $nj$-Symbols are listed here. Introductions to Recoupling theory can be found in various textbooks on quantum mechanics and quantum angular momentum, e.g. \cite{BrinkSatchler68}. For an extensive list of properties of $nj$-symbols see e.g. \cite{wolfram}

\begin{itemize}
\item[] {\bf 3j-Symbols}
	\begin{itemize}
		\item[] \emph{Relation to Clebsh-Gordan coefficients:}
			\[
			\scal{a,\alpha; b,\beta| c,\gamma}=(-)^{b-a+\gamma}\sqrt{2c+1}
			\threej{a}{\alpha}{b}{\beta}{c}{-\gamma}
			\]
			where $\ket{b,\beta;a,\alpha}= \ket{b,\beta}\otimes\ket{a,\alpha}$
			
		\item[]  \emph{Compatibility criteria}\\
			If one (or several) of the following rules is violated, 
			then $\threej{a}{\alpha}{b}{\beta}{c}{\gamma}$ is vanishing:
			\begin{itemize}
				\item $a,b,c\in\frac{1}{2}\N$, $a\pm\alpha\in\N$,  $-a\leq\alpha\leq a$, $\cdots$
				\item $\alpha+\beta+\gamma=0$
				\item $a+b+c\in\N$, $|a-b|\leq c\leq a+b$ (triangle inequality)
			\end{itemize}
		
		\item[]  \emph{Symmetries}
			\[
			\threej{a}{\alpha}{b}{\beta}{c}{\gamma}=(-)^{a+b+c}\threej{a}{-\alpha}{b}{-\beta}{c}{-\gamma}
			=(-)^{a+b+c}\threej{b}{\beta}{a}{\alpha}{c}{\gamma}=\threej{b}{\beta}{c}{\gamma}{a}{\alpha}
			\]
	\end{itemize}
\item[] {\bf 6j-Symbols}
	\begin{itemize}
		\item[]  \emph{Definition in terms of $3j$'s}
			\begin{align*}
			&\sixj{j_1}{j_4}{j_2}{j_5}{j_3}{j_6}=\\
			&\sum_{\mu_1,\cdots,\mu_6} (-)^{\sum\limits_{i=1}^6 (j_i-\mu_i)}
			\threej{j_1}{\mu_1}{j_2}{\mu_2}{j_3}{-\mu_3}\threej{j_1}{-\mu_1}{j_5}{\mu_5}{j_6}{\mu_6}
			\threej{j_4}{\mu_4}{j_5}{-\mu_5}{j_3}{\mu_3}\threej{j_4}{-\mu_4}{j_2}{-\mu_2}{j_6}{-\mu_6}
			\end{align*}
		\item[]  \emph{Symmetries}
			\begin{align*}
			\sixj{a}{d}{b}{e}{c}{f}=\sixj{b}{e}{a}{d}{c}{f}=\sixj{b}{e}{c}{f}{a}{d}=
			\sixj{d}{a}{e}{b}{c}{f}=\sixj{d}{a}{b}{e}{f}{c}=\sixj{a}{d}{e}{b}{f}{c}
			\end{align*}
		\item[]  \emph{Compatibility}
			\\[3pt]
			$\sixj{a}{d}{b}{e}{c}{f}=0$ unless the triangle inequalities hold for $\{a,b,c\}, \{a,e,f\},\{d,b,f\}$ 
			and $\{d,e,c\}$\\[3pt]
		\item[]  \emph{Orthogonality}
			\[
			\sum_x d_x \sixj{a}{d}{b}{e}{x}{c}\sixj{a}{d}{b}{e}{x}{c'}=\delta_{c,c'}\frac{1}{d_c}
			\]
			if the compatibility requirements are fulfilled.
	\end{itemize}
\item[]{\bf 9j-Symbols}
	\begin{itemize}
		\item[]  \emph{Definition by $3j$'s}
			\begin{gather*}
			\ninej{j_1}{j_2}{j_3}{j_4}{j_5}{j_6}{j_7}{j_8}{j_9}=
			\sum_{\mu_1,\dots,\mu_9} (-)^{\sum\limits_i^9(j_i-\mu_i)} 
			\threej{j_1}{\mu_1}{j_2}{\mu_2}{j_3}{\mu_3}\threej{j_4}{\mu_4}{j_5}{\mu_5}{j_6}{\mu_6}
			\threej{j_7}{\mu_7}{j_8}{\mu_8}{j_9}{\mu_9}\\\times
			\threej{j_1}{-\mu_1}{j_4}{-\mu_4}{j_7}{-\mu_7}\threej{j_2}{-\mu_2}{j_6}{-\mu_6}{j_8}{-\mu_8}
			\threej{j_3}{-\mu_3}{j_6}{-\mu_6}{j_9}{-\mu_9}
			\end{gather*}
		\item[]  \emph{Definition by $6j's$}
			\[
			\ninej{a}{f}{r}{d}{q}{e}{p}{c} {b}:=\sum_x d_x (-1)^{2x} \sixj{a}{c}{b}{d}{x}{p} 
			\sixj{c}{e}{d}{f}{x}{q} \sixj{e}{a}{f}{b}{x}{r}
			\]
		\item[] \emph{Symmetries}
			\begin{align*}
			\ninej{a}{f}{r}{d}{q}{e}{p}{c}{b}=(-)^S\ninej{d}{q}{e}{a}{f}{r}{p}{c}{b}
			=(-)^S\ninej{a}{f}{r}{p}{c}{b}{d}{q}{e}
			=(-)^S\ninej{f}{a}{r}{q}{d}{e}{c}{p}{b}=(-)^S\ninej{a}{r}{f}{d}{e}{q}{p}{b}{c}
			\end{align*}
			where $S=a+b+c+d+e+f+p+q+r$.
		\end{itemize}
\end{itemize}
	
\section{Volume}
\label{app:Volume}
For the sake of completeness the volume operator is briefly reviewed and the graphical framework for computing the action is discussed here, closely following \cite{DePietriRovelli96} and \cite{Gaul}. Thereby we restrict our attention to the  Ashtekar-Lewandowski volume \cite{AshtekarLewand98}, which was also analyzed in greater detail in \cite{Brunnemann:2004xi}.

\subsection{General properties} 

Let $T_s$ be a cylindric function on a spin network $s$ and let $\mathcal{V}(\Gamma)$ denote the set of  nodes of the underlying graph $\Gamma$. The volume operator $\hat{V}$ acts on $T_s$ by 
\begin{equation}
  \label{volume}
   {\hat{V}} \, T_s = \sum_{v \in \mathcal{V}(\Gamma)}
         {\hat{V}}_v \, T_s ~,
\end{equation}
where 
\begin{equation}
  \label{V_v}
   \hat{V}_v = l^3_p\, \sqrt{\left| \frac{i}{16\cdot3!} 
        \sum_{e_{I}\cap e_{J} \cap e_{K}=v} 
        \epsilon(e_{I},e_{J},e_{K})\,  {W}_{[IJK]} 
        \right|} ~.
\end{equation}
The sum extends over all triples $(e_I, e_J, e_K)$ of edges adjacent to the vertex $v$, $l_p$ denotes the Planck length and $\epsilon(e_{I},e_{J},e_{K})$ is determined by the orientation of the tangents $\dot{e}_I$ at $v$, i.e. $\epsilon(e_{I},e_{J},e_{K})=\mathrm{sgn}[\det(\dot{e}_{I},\dot{e}_{J},\dot{e}_{K})]$. The grasping operator 
\beq
\label{eqn:grasping}
W_{[IJK]}:=\epsilon_{ijk} X^i_I X^j_J X^k_K
\eeq
depends on the right invariant vector fields (of $\SU(2)$)
\beq
X^i_I:=-i\Tr\left[(h_{e_I}\sigma^i)^T\frac{\partial\;}{\partial h_{e_I}}\right]
\eeq
along the edge $e_I$ at $v$. Here, $e_I$ was chosen to be outgoing of $v$, $\sigma^i$ are the Pauli matrices and $T$ denotes the transpose. The matrix elements $[\sigma^i]^A_B$ with spinorial indices $A,B=0,1$ and vector indices $i=1,2,3$ are natural intertwiners of spin and vector representation (i.e. $(1,1/2,1/2)$). Thus, $X^i_I $ inserts an intertwiner $[\sigma^i]^A_B$ at an edge labeled by the fundamental representation. Since an irrep $l$ is the completely symmetrized tensor product of $2l$ fundamental representations this can be immediately generalized to edges labeled by $l$. Similarly $\epsilon$ is a vector invariant intertwining $(1,1,1)$. Following the spirit of the graphical calculus, $W_{[IJK]}$ can be visualized by
\beq
\label{eqn:grasp}
\sqrt{3!}\quad \makeSymbol{\raisebox{.25\height}{\includegraphics[scale=0.8]{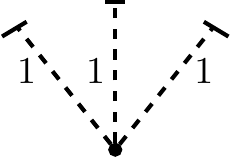}}}
\eeq
where each handle grasps an edge $e$ at $v$. Note, this grasping depends on the orientation of $e$. Yet, we want to use $3j$-symbols rather then the invariants $\sigma$ and $\epsilon$. These differ from the corresponding $3j$'s by a normalization constant. I.e. the $3j$'s are normalized to one while $\Tr[\epsilon^2]=3!$ and $\sum_i\Tr_l[(\sigma^i)^2]=4[l(l+1)(2l+1)]$ where $\Tr_l$ indicates that the trace is evaluated in spin $l$. Thus, $\sqrt{3!}$ in \eqref{eqn:grasp} stems from the normalization of $\epsilon$. Each grasp of an edge colored by $l$ gives additionally a factor $N_l:=-i2\sqrt{l(l+1)(2l+1)}$, in particular
\beq
\makeSymbol{\raisebox{.4\height}{\includegraphics{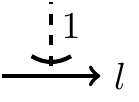}}}= N_l\cdot\;\makeSymbol{\raisebox{.1\height}{\includegraphics{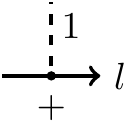}}}~.
\eeq
When using this formalism\footnote{This was developed first in \cite{DePietriRovelli96}. However, they used another calculus based on Temperley-Lieb algebras which yields different normalizations.}, one should keep in mind that the right invariant vector fields are derivative operators and hence only act on \emph{true} holonomies. On the other hand $W$ can not change the graph itself but only alter the intertwiners. Consequently, the links in \eqref{eqn:grasp} are only added \emph{at} the vertex (in a dashed environment) and can be erased again by pure recoupling theory.

\subsection{Trivalent nodes}
\label{app:V on trivalent}

The computational effort to determine the matrix elements of the volume operator is increasing heavily with the valency of the nodes. Therefore we only discuss the case of a trivalent node
\beq
\ket{v_x}:= \makeSymbol{\includegraphics[scale=0.8]{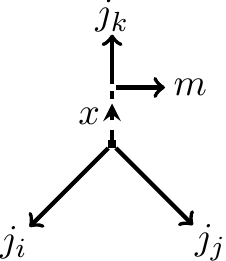}}
\eeq
transforming in spin $m$. On a trivalent node $\hat{V}_v$ reduces to $\frac{l_p^3}{4}\,\sqrt{\left| i  {W}_{ijk]} \right|}$ and the grasping yields
\beq
\label{4valente graspato}
\sqrt{3!}\quad\makeSymbol{\raisebox{.2\height}{\includegraphics[scale=0.8]{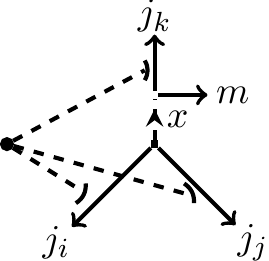}}}
=\sqrt{3!}N_{j_k} N_{j_i}N_{j_j}\sum_y d_y\;
\makeSymbol{\includegraphics[scale=0.8]{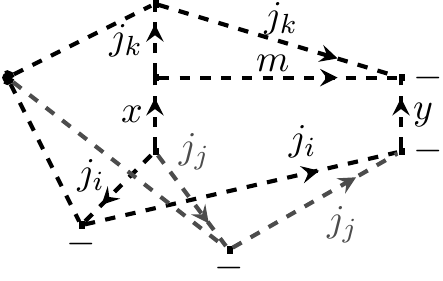}}\makeSymbol{\includegraphics[scale=0.8]{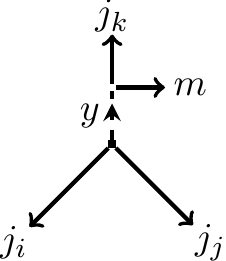}}~.
\eeq
In the second step a resolution of the identity in the intertwiner space was inserted (see section \ref{ssec:Imp_Id}). A careful evaluation by the usual methods reveals that 
\beq
\makeSymbol{\raisebox{-1.25\height}{\includegraphics[scale=0.8]{Graphics/lorentz3-figure44}}} = (-)^{x+j_k-m} \sixj{j_k}{x}{j_k}{y}{1}{m} \ninej{j_i}{1}{j_i}{x}{1}{y}{j_j}{1}{j_j}~.
\eeq
The $6j$ in this equation is only non-zero if $(x,y,1)$ obey the triangle inequality, i.e. if $|x-y|\leq1$. But for $x=y$ the $9j$-symbol is vanishing since it is anti-symmetric when switching first and last column. On the other hand, the $6j$ and the $9j$ are symmetric under the exchange of $x$ and $y$ for $y=x\pm1$. Thus, if we work with rescaled nodes $\ket{v_x}_N=\sqrt{d_x}\ket{v_x}$ then
\[
\hat{W}_{[IJK]}\ket{v_x}_N= \sum_y \tilde{W}^x\,_y \ket{v_y}_N
\]
yields an antisymmetric matrix $\tilde{W}$ which has only sub- and super-diagonal non-zero entries:
\beq
\tilde{W}^x\,_{y} =\delta_{y,x\pm1} \sqrt{3!} \sqrt{d_x d_{y}} N_{j_k} N_{j_i}N_{j_j} (-)^{x+j_k-m}\sixj{j_k}{x}{j_k}{y}{1}{m} \ninej{j_i}{1}{j_i}{x}{1}{y}{j_j}{1}{j_j}~.
\eeq
Fortunately, this matrix is diagonalizable so that the square root of $W$ has a well-defined meaning. Suppose $U$ is the (unitary) map that maps $\{\ket{v_x}_N\}$ to the eigenbasis of $\tilde{W}$ then the matrix elements of the volume (compare with \eqref{eqn:voume1}) are finally given by
\beq
V^x\,_y(j_i,j_j|j_k;m)=\frac{l_p^3}{4}\sqrt{\frac{d_y}{d_x}} [U^{-1}]^x\,_w  \sqrt{|i[\tilde{W}_D]^w\,_z} U^z\,_y\,\Lambda
\eeq
where $\tilde{W}_D=U\tilde{W} U^{-1}$ is the diagonalized matrix. For small $m$ the intertwiner space is low dimensional and this diagonalization does not cause much problems:
\begin{itemize}
\item[{\bf $m=0$:}] For $m=0$ the intertwiner space is one-dimensional and therefore $V$ annihilates gauge-invariant trivalent nodes.

\item[{\bf $m=\frac{1}{2}$:}] Here, the intertwiner space is 2-dimensional and it is not hard to check that $V$ is diagonal with matrix elements
\[
 V^x\,_y(j_i,j_j,j_k|\frac{1}{2})=\delta^x_y \frac{l_p^3}{4} \left[ |i\sqrt{3!} \sqrt{d_{j_k+\frac{1}{2}}\, d_{j_k-\frac{1}{2}}} \,N_{j_k} N_{j_i}N_{j_j} \tinysixj{j_k}{j_k+\frac{1}{2}}{j_k}{j_k-\frac{1}{2}}{1}{\frac{1}{2}} \tinyninej{j_i}{1}{j_i}{j_k+\frac{1}{2}}{1}{j_k-\frac{1}{2}}{j_j}{1}{j_j}|\right]^{-\frac{1}{2}}~.
 \]

\item[{\bf $m=1$:}] For $m=1$ $\tilde{W}$ is a $3\times3$-matrix but does not have full rank. Nevertheless, it is diagonalizable when applying first a similarity transformation $S$ (see \cite{Brunnemann:2004xi}):
\begin{gather*}
\tilde{W}= 
\begin{pmatrix}
0 & w_1 & 0\\
-w_1 & 0 & w_2\\
0& -w_2 & 0
\end{pmatrix}
\xrightarrow{ \;S\;}
\begin{pmatrix}
0 & \frac{1}{w_1} \lambda^2 & 0\\
-w_1 & 0 & 0\\
0&- \frac{w_2}{w_1} & 0
\end{pmatrix}
\xrightarrow{\;U\;} \lambda
\begin{pmatrix}
-i & 0 & 0\\
0 & i & 0\\
0& 0 & 0
\end{pmatrix}\\[5pt]
\xrightarrow{\quad}
V= \frac{l_p^3}{4} 
\begin{pmatrix}
\frac{|w_1|^2}{\lambda} & 0 &- \frac{|w_1\,w_2|^2}{\lambda} \sqrt{\frac{2j_k+3}{2j_k-1}}\\
0& \lambda & 0\\
\frac{|w_1\,w_2|^2}{\lambda} \sqrt{\frac{2j_k-1}{2j_k+3}}& \frac{|w_2|}{|w_1|}  &\frac{|w_2|^2}{\lambda}
\end{pmatrix}
\end{gather*}
where 
\[
w_1=(-)^{2(j_k-1)} \sqrt{3!} \sqrt{d_{j_k-1}\, d_{j_k}} \,N_{j_k} N_{j_i}N_{j_j} \tinysixj{j_k}{j_k-1}{j_k}{j_k}{1}{\frac{1}{2}} \tinyninej{j_i}{1}{j_i}{j_k-1}{1}{j_k}{j_j}{1}{j_j}
\]
\[
w_2=(-)^{2j_k-1} \sqrt{3!} \sqrt{d_{j_k+1}\, d_{j_k}} \,N_{j_k} N_{j_i}N_{j_j} \tinysixj{j_k}{j_k+1}{j_k}{j_k}{1}{\frac{1}{2}} \tinyninej{j_i}{1}{j_i}{j_k+1}{1}{j_k}{j_j}{1}{j_j}
\]
and $\lambda^2=|w_1|^2+|w_2|^2$.

\end{itemize}

\end{document}